\theoremstyle{plain}
\theoremstyle{definition}
\newtheorem{theorem}{Theorem}[section]
\newtheorem{lemma}[theorem]{Lemma}
\newtheorem{definition-theorem}[theorem]{Definition-Theorem}
\newtheorem{definition-proposition}[theorem]{Definition-Proposition}
\newtheorem{proposition}[theorem]{Proposition}
\newtheorem{corollary}[theorem]{Corollary}
\newtheorem{example}{Example}[section]
\newtheorem{examples}{Example}[subsection]
\newtheorem{remark}{Remark}[section]
\newtheorem{remarks}{Remarks}[section]
\newtheorem{definition}{Definition}[section]
\numberwithin{equation}{section} % requires package amsthm
\DeclareMathOperator{\End}{End}
\DeclareMathOperator{\sgn}{sgn}
\DeclareMathOperator{\GL}{GL}
\DeclareMathOperator{\SO}{SO}
\def\ra{{\rightarrow}}
\def\det{\mathrm {det}}
\def\Pf{\mathrm {Pf}}
\def\End{\mathrm {End}}
\def\span{\mathrm {span}}
\def\span{\mathrm {span}}
\def\Gr{\mathrm {Gr}}
\def\&{&{\hskip -20pt}}
\DeclarePairedDelimiter{\no}{:}{:}
\def\be{\begin{equation}}
\def\ee{\end{equation}}
\def\bea{\begin{eqnarray}}
\def\eea{\end{eqnarray}}
\def\bt{\begin{theorem}}
\def\et{\end{theorem}}
\def\bex{\begin{example}\small \rm}
\def\eex{\end{example}}
\def\bexs{\begin{examples}\small \rm}
\def\eexs{\end{examples}}
\def\ra{\rightarrow}
\def\ss{\subset}
\def\br{\begin{remark}\small \rm}
\def\er{\end{remark}}
\def\CC{{\mathcal C}}
\def\FF {{\mathcal F}}
\def\HH{{\mathcal H}}
\def\II {{\mathcal I}}
\def\PP{{\mathcal P}}
\def\QQ {{\mathcal Q}}
\def\Bb{\mathbf{B}}
\def\Ib{\mathbf{I}}
\def\Nb{\mathbf{N}}
\def\Pb{\mathbf{P}}
\def\Zb{\mathbf{Z}}
\def\pb{\mathbf{p}}
\def\rb{\mathbf{r}}
\def\tb{\mathbf{t}}
\def\xb{\mathbf{x}}
 \def\grl{\mathfrak{l}}
 \def\gro{\mathfrak{o}} 
\def\grP{\mathfrak{P}}
 \def\grgl{\mathfrak{gl}}
\newcommand{\Pa}{\mathop\mathrm{P}\nolimits}
\def\bp{\begin{Proposition}\rm}
\def\ep{\end{Proposition}}
\def\bc{\begin{corollary}}
\def\ec{\end{corollary}}
\def\bl{\begin{lemma}\em}
\def\el{\end{lemma}}
\def\be{\begin{equation}}
\def\ee{\end{equation}}
\def\br{\begin{remark}\rm\small}
\def\er{\end{remark}}
\def\brs{\begin{remarks}.\\ \rm\
\begin{enumerate}}
\def\ers{\end{enumerate}\end{remarks}}
\def\bea{\begin{eqnarray}}
\def\eea{\end{eqnarray}}
\begin{document}

\begin{center}
\begin{Large}\fontfamily{cmss}
\fontsize{17pt}{27pt}
\selectfont
	\textbf{Bilinear expansions of lattices of KP $\tau$-functions in BKP $\tau$-functions: a fermionic approach}
	\end{Large}
	
\bigskip \bigskip
\begin{large} 
J. Harnad$^{1, 2}$\footnote[1]{e-mail:harnad@crm.umontreal.ca} 
and A. Yu. Orlov$^{3}$\footnote[2]{e-mail:orlovs55@mail.ru} 
 \end{large}
 \\
\bigskip

\begin{small}
$^{1}${\em Centre de recherches math\'ematiques, Universit\'e de Montr\'eal, \\C.~P.~6128, succ. centre ville, Montr\'eal, QC H3C 3J7  Canada}\\
$^{2}${\em Department of Mathematics and Statistics, Concordia University\\ 1455 de Maisonneuve Blvd.~W.~Montreal, QC H3G 1M8  Canada}\\
$^{3}${\em Shirshov Institute of Oceanology, Russian Academy of Science, Nahimovskii Prospekt 36, Moscow 117997, Russia }
\end{small}
 \end{center}

\medskip

\begin{abstract}
We derive a bilinear expansion expressing  elements of a lattice of Kadomtsev-Petvishvili (KP) $\tau$-functions,
 labelled by partitions, as a sum over products of pairs of elements of an associated lattice of BKP $\tau$-functions, 
labelled by strict partitions. This generalizes earlier results relating determinants and Pfaffians of minors 
of  skew symmetric matrices, with applications to Schur functions and Schur $Q$-functions.
  It is deduced using the representations of  KP and BKP $\tau$-functions as vacuum expectation values (VEV's) 
  of products of fermionic  operators of charged and neutral type, respectively.
  The lattice is generated by insertion of products of pairs of charged creation and annihilation operators.  
 The result follows from expanding the product as a sum of monomials  in the neutral fermionic generators
   and applying a factorization theorem for  VEV's of products of operators in the mutually commuting subalgebras.
   Applications include the case of inhomogeneous polynomial $\tau$-functions of KP and BKP type.
     \end{abstract}
\bigskip

%%%%%%%%%%%%%%%%%%%%%%%%%%%  Section 1. Introduction %%%%%%%%%%%%%%%%%%%%%%%%%%
\section{Introduction}
\label{introduction}

In \cite{BHH}, a bilinear expansion was derived, expressing determinants of submatrices of a skew symmetric matrix 
as sums over products of  Pfaffians of their principal minors. This was shown to follow from the 
relation between the Pl\"ucker \cite{GH} and Cartan \cite{Ca} embeddings 
of Grassmannians and orthogonally isotropic Grassmannians in the projectivization of an exterior space.  
Using fermionic methods, a function theoretic realization of this identity was derived in \cite{HO}, expressing Schur functions
$s_\lambda$, labelled by integer partitions $\lambda$, as sums over products of pairs $Q_{\mu^+}Q_{\mu^-}$
of Schur Q-functions, labelled by pairs $(\mu^+, \mu^-)$ of {\em strict} partitions \cite{Mac}. Schur functions and Schur $Q$-functions  are the basic building blocks for $\tau$-functions of the KP and BKP integrable hierarchies, and the simplest case of this bilinear relation  is the well-known fact that the square of any BKP $\tau$-function is the restriction of  a KP $\tau$-function to the odd flow variables \cite{DJKM1, JM, vdLO}.

The integrable KP  and BKP hierarchies are closely related to the $A_\infty$ and  $B_\infty$  root lattices,
respectively \cite{JM}. In this work  we define, for every $\GL(\infty)$ group element $\hat{g}$, an $A_\infty$ lattice of KP $\tau$-functions,
 labelled by partitions, and for every $\SO(\infty)$ group element $\hat{h}$,  a $B_\infty$ lattice of BKP $\tau$-functions,
  labelled by strict partitions.   When $\hat{g}$ and $\hat{h}$ are suitably related (eq.~(\ref{g_h_def})), we show (Theorem \ref{KP_BKP_lattice_bilinear}) how a generalization of the bilinear expansions  of  \cite{BHH} and \cite{HO} follows,
 expressing the elements of the lattice of KP $\tau$-functions as sums over products of  pairs of  $\tau$-functions from the 
 associated lattice of BKP type.  Varying the choice of infinite orthogonal group element  
  $\hat{h}$ determining the BKP $\tau$-function at the origin of the lattice  gives all possible functional realizations of 
  such bilinear relations.  Two key ingredients underlying this result are:    the identification of {\em  polarizations} 
  (Section \ref{polarizations}) associated  with a given partition, which are just 
  pairs $(\mu^+, \mu^-)$  of strict partitions sharing the same intersections and unions,  together with the {\em factorization Lemma} \ref{factorization_lemma},   which allows us to express vacuum expectation values (VEV's) of products of mutually anticommuting
   sets of neutral fermionic operators $\{\phi^\pm_j\}_{j \in \Zb}$ as products of VEV's of the two separate types.
  The results  of \cite{BHH} and \cite{HO} are recovered as special cases, together with other examples involving 
more general lattices of KP and BKP $\tau$-functions,  in particular, those consisting of inhomogeneous 
polynomial $\tau$-functions. 

%%%%%%%%  Section 2. Fermionic Fock space, creation and annihilation operators, KP and BKP %%%%%%%%

\section{Fermionic Fock space, creation and annihilation operators, KP and BKP $\tau$-functions}
\label{fermi_ops}

%%%%%%%%%% Subsection 2.1 Charged fermions and KP flows %%%%%%%%%%%%%%%%
\subsection{Charged fermions and KP flows}
\label{charge_fermion_KP}

We recall some definitions and notations for free fermions and KP flows \cite{JM, MJD, HB}. 
Given a separable Hilbert space $\HH$ with orthonormal basis $\{e_j\}_{j\in \Zb}$
and its dual space $\HH^*$, with dual basis $\{e^j\}_{j\in \Zb}$ satisfying
\be
e^j(e_k) = \delta_{jk},
\label{basis_duality}
\ee
the fermionic Fock space $\FF$ is defined as the $\Zb$-graded semi-infinite wedge product 
\be
\FF:= \Lambda^{\infty/2}\HH =\sum_{n\in \Zb} \FF_n, 
\label{fock_def}
\ee
where $\FF_n$ denotes the $n$th  fermionic charge sector.
The elements of $\FF$ are denoted by {\em ket} vectors $| \cdot \rangle$, while those of the dual space $\FF^*$
are denoted by {\em bra} vectors $\langle \cdot |$. Let 
\be
\lambda = (\lambda_1, \lambda_2, \dots, \lambda_{\ell(\lambda)}, 0, 0 \dots)
\label{part_lambda}
\ee
be an extended integer partition of {\em length} $\ell(\lambda)$, consisting of a weakly decreasing
sequence $\lambda_1 \ge \cdots \ge \lambda_{\ell(\lambda})> 0$ of $\ell(\lambda)$ positive integers, 
followed by an infinite sequence  \hbox{$(\{\lambda_{\ell(\lambda)+j}=0\}_{j=1, 2, \dots}$}of $0$'s.
 An orthonormal basis $\{|\lambda; n\rangle\}$ for $\FF_n$ is provided by the elements 
\be
|\lambda; n\rangle\:= e_{l_1} \wedge e_{l_2} \wedge \cdots, 
\label{fock_basis}
\ee
where $(l_1, l_2, \dots)$ is a strictly decreasing sequence of integers, called {\em particle positions}.
  defined in terms of the partition $\lambda$ by
  \be
  l_j = \lambda_j -j +n, \quad j=1, 2, \dots.
  \label{part_positions}
  \ee
which, after the first $\ell(\lambda)$ elements, saturate to the  sequence of all subsequent decreasing integers,
$n-\ell(\lambda), n-\ell(\lambda) -1, \dots$. The dual basis vectors are denoted $\langle \lambda; n|$, and
satisfy
\be
\langle \lambda; n | \mu; m\rangle = \delta_{\lambda, \mu} \delta_{nm}.
\label{fock_basis_duality}
\ee
The vacuum states  in each sector $\FF_n$  correspond to the trivial partition
$\emptyset :=(0, 0, \dots)$ and are denoted
\be
| n\rangle := |\emptyset; n\rangle.
\label{n_vac}
\ee

The irreducible representation 
\bea
\Gamma: \CC_{\HH+\HH^*}(Q)  &\& \ra \End(\FF) \cr
\Gamma: \xi &\& \mapsto \Gamma_\xi
\label{clifford_Gamma}
\eea
of the Clifford algebra $\CC_{\HH+\HH^*}(Q)$ algebra 
on $\HH + \HH^*$ with respect to the scalar product
\be
Q(v + \mu, w + \nu) := \nu(v) + \mu(w), \quad v,w \in \HH, \ \mu, \nu \in \HH^*
\label{Q_scalar_prod}
\ee
 is generated by scalars and linear elements, the latter acting by exterior and interior multiplication:
 \be
 \Gamma_v := v\wedge, \quad \Gamma_\mu := i_\mu.
 \label{clifford_rep_gen}
 \ee
The fermionic creation and annihilation operator are representations of the basis elements:
\be
\psi_j := \Gamma_{e_j} = e_j \wedge, \quad \psi^\dag := \Gamma_{e^j} := i_{e^j}
\ee
and satisfy the anticommutation relations
\be
[\psi_j, \psi_k]_+ = 0, \quad [\psi^\dag_j, \psi^\dag_k]_+ = 0, \quad [\psi_j, \psi^\dag_k]_+ = \delta_{jk}.
\label{psi_i_anticomm}
\ee

For $j>0$, $\psi_{-j}$ and $\psi^\dag_{j-1}$ (resp. $\psi^\dag_{-j}$ and $\psi_{j-1}$) 
annihilate the right (resp. left) vacua:
\bea
\psi_{-j} |0\rangle &\&=0, \quad \psi^\dag_{j-1} |0\rangle =0, \quad \forall  \, j >0,
\label{vac_annihil_psi_j_r}
 \\
\langle 0| \psi^\dag_{-j}  &\&=0, \quad \langle 0 | \psi_{j-1}  =0, \quad \forall  \, j >0.
\label{vac_annihil_psi_j_l}
\eea
The basis states may be expressed by acting with a product of pairs
of operators $\psi_j\psi^\dag_k$ on the vacuum:
\be
|\lambda; 0\rangle =: |\lambda \rangle =  (-1)^{\sum_{j=1}^r\beta_j} \prod_{j=1}^r 
 \left( \psi_{\alpha_j} \psi_{-\beta_j-1}^\dag \right) |0\rangle ,
 \label{lambda_frob_vac}
\ee
where $(\alpha_1, \dots, \alpha_r | \beta_1, \dots, \beta_r)$ are the Frobenius  indices \cite{Mac}  of the  partition $\lambda$.

The fermionic representation of elements $g\in \GL_0(\HH)$ of the connected component of the 
group $\GL(\HH)$ can be expressed as exponentials of $\grgl(\HH)$ algebra elements, which are linear
 combinations of normal ordered products $\no{\psi_j \psi^\dag_k}$ of creation and annihilation operators \cite{JM}
\be
\hat{g}(\tilde{A}) = e^{\sum_{i,j \in \Zb} \tilde{A}_{jk} \no{\psi_j \psi^\dag_k}},
\label{hat_g_exp_A}
\ee
where normal ordering 
\be
\no{\hat{L}_1 \hat{L}_2} = \hat{L}_1 \hat{L}_2 - \langle 0 | \hat{L}_1 \hat{L}_2 | 0 \rangle.
\label{normal_order}
\ee 
of a product of linear elements $\hat{L}_1 \hat{L}_2$ is defined so the vacuum expectation value (VEV) vanishes.
There are various conventions on restricting the permissible values of the doubly infinite matrix  elements $\tilde{A}_{jk}$,
depending on the class of solutions to the KP hierarchy one wishes to include \cite{Sa, JM, SW}.

The KP flow parameters, denoted
\be
{\bf t}  =(t_1, t_2, \dots), 
\label{KP_flow_params}
\ee 
may be interpreted as linear exponential coordinates on an infinite abelian subgroup $\Gamma_+ \ss \GL(\HH)$
defined by
 \bea
 \Gamma_+=\{\gamma_+({\bf t}) := e^{\sum_{j=1}^\infty t_j \Lambda^j} \}, 
 \label{Gamma_+_def}
 \eea
 where $\Lambda\ \in \End(\HH)$ is the shift operator which, acts on basis elements by
 \be
 \Lambda(e_j) = e_{j-1}.
 \label{shift_Lambda}
 \ee
 The KP flows are generated by the action of $\Gamma_+$ on the infinite Grassmannian
 $\Gr_{\HH_+}(\HH)$ of subspaces $w\ss \HH$ commensurate with the subspace
 \be
 \HH_+ = \span\{e_{-j}\}_{j\in \Nb^+},
 \label{HH_+_def}
 \ee
 satisfying certain admissibility conditions \cite{SW, Sa}. They are mapped into the projectivization
 $\Pb(\FF)$ of $\FF$ by the Pl\"ucker map:
 \be
 \grP\grl: \span\{w_j\}_{j\in \Nb^+} \ra [w_1 \wedge  w_2 \wedge \cdots ], 
 \label{pluskcer_map}
\ee
where $\{w_j\}_{j\in \Nb^+} $ is any admissible basis of $w\in\Gr_{\HH_+}(\HH)$.
 The image $\grP\grl(\HH_+)$ of $\HH_+\ss \HH$ is the projectivized vacuum state $[|0\rangle]$,
 and the Pl\"ucker map intertwines the action of $\Gamma_+$ induced on $\Gr_{\HH_+}(\HH)$
 with the (projectivized) fermionic representation of   $\Gamma_+ \ss \GL(\HH)$ on $\FF$,
 given by
\be
\Gamma_{\gamma_+({\bf t})}=  \hat{\gamma}_+({\bf t}) := e^{\sum_{j=1}^\infty t_j J_j},
\label{hat_gamma_+}
\ee
where 
\be
J_j := \sum_{k\in \Zb} \psi_k \psi^\dag_{k+j}, \quad j=1,2, \dots
\label{J_j_def}
\ee
are the positive Fourier components of the current operator  \cite{JM}, which mutually commute
\be
[J_n, J_m ] =0,\quad n,m \in \Nb^+.
\label{current_comp_cr}
\ee
By (\ref{vac_annihil_psi_j_r}), the  current components $J_n$  annihilate the vacuum $|0\rangle$:
\be
J_n |0 \rangle =0, \quad n\in \Nb^+
\label{J_n_vac_0}
\ee
and hence $\Gamma_+$ stabilizes the vacuum
\be
\hat{\gamma}_+({\bf t}) |0\rangle = |0\rangle.
\label{gamma_+_vac}
\ee

For each $g\in \GL_0(\HH)$, we define the corresponding KP $\tau$-function
\be
\tau^{KP}_g({\bf t}) = \langle 0 | \hat{\gamma}_+({\bf t})\hat{g} | 0 \rangle ,
\label{KP_tau_fermi}
\ee
which satisfies the Hirota bilinear residue equations \cite{Sa, JM} of the KP hierarchy.
More generally, $\hat{g}$ need not be a  $\GL_0(\HH)$ group element,
 but any element of the Clifford algebra\index{Clifford algebra}  satisfying the 
 bilinear commutation relation
\be
\left[ \II ,  \hat{g} \otimes \hat{g} \right] = 0, 
\label{bilinear_commut}
\ee
where $\II$ is the endomorphism of  $\FF \otimes \FF$ defined by
\be
\II:= \sum_{j\in \Zb} \psi_j \otimes \psi_j^\dag.
\label{I_bilinear}
\ee

%%%%%%%%%% Subsection 2.2 Neutral fermions and BKP flows %%%%%%%%%%%%%%%%
\subsection{Neutral fermions and BKP flows}
\label{neutral_fermionc_BKP}

 Neutral fermions $\phi^+_j$ and $\phi^-_j$ are defined \cite{DJKM1, DJKM2, JM, You, KvdL1} by 
\be
\phi^+_j :=\frac{\psi_j+ (-1)^j\psi^\dag_{-j}}{\sqrt 2},\quad 
\phi^-_j :=i\frac{\psi_j-(-1)^j \psi^\dag_{-j}}{\sqrt 2},\quad j \in \Zb
\label{neutral_charged}
\ee
(where $i=\sqrt{-1}$), and satisfy
\be
 [\phi^+_j,\phi^-_k]_+=0,\quad [\phi^+_j,\phi^+_k]_+ = [\phi^-_j,\phi^-_k]_+ =(-1)^j \delta_{j+k,0}.
 \label{neutral_ccr}
\ee
In particular,
\be
(\phi^+_0)^2=(\phi^-_0)^2=\tfrac{1}{2}.
\label{square_phi_0}
\ee
Acting on the vacua $|0\rangle$ and $|1 \rangle$, we have
\bea
\phi^+_{-j}|0\rangle &\&= \phi^-_{-j} |0\rangle =\phi^+_{-j}|1\rangle = \phi^-_{-j} |1\rangle =0 , \quad  \forall  j > 0  , \quad \forall  j > 0, 
\label{phi_pm_j_01vac_r} \\
\langle 0| \phi^+_{j} &\&= \langle 0|\phi^-_{j}=  \langle 1| \phi^+_{j} = \langle 1|\phi^-_{j}  =0 , \quad  \forall  j > 0 ,
\label{phi_pm_j_01vac_l}
\\
\phi^+_0|0\rangle &\& = 
- i \phi^-_0 |0\rangle =
\tfrac{1}{\sqrt{2}} \psi_0|0\rangle = \tfrac{1}{\sqrt{2}} |1\rangle  ,
\label{phi_pm_0vac_r}    
\\
 \langle 0| \phi^+_0 &\& = 
 i \langle 0|\phi^-_0  =
\tfrac{1}{\sqrt{2}} \langle  0 | \psi_0^\dag = \tfrac{1}{\sqrt{2}}\langle 1|.
\label{phi_pm_0vac_l}  
\eea

Let $\tilde{\HH}^\pm \ss \HH + \HH^*$ be the two mutually orthogonal subspaces  defined
by
\bea
\tilde{\HH}^+  &\&:= \span\{f_j^+:=\frac{1}{\sqrt{2}} ( e_j + (-1)^{-j} e^{-j}\}_{j\in \Zb}, \cr
 \cr
\tilde{\HH}^-&\& := \span\{f_j^-:= \frac{i}{\sqrt{2}} ( e_j - (-1)^{-j} e^{-j}\}_{j\in \Zb} .
\eea
Restricting  the scalar  product $Q$ to each of these subspaces, we have the induced scalar products $Q_\pm$ on $\HH^\pm$
defined by
\bea
Q_+(f^+_j, f^+_k) &\&:=  Q(f^+_j, f^+_k) = (-1)^j\delta_{j+k, 0}, \cr
&\& \cr
 Q_-(f^-_j, f^-_k) &\&:= Q(f^-_j, f^-_k) = (-1)^j\delta_{j+k, 0}. 
 \label{Q_pm_def}
\eea
We thus have two mutually commuting fermionic representations of the orthogonal algebras $\gro(\tilde{\HH}^\pm,Q_\pm)$
defined by the span of the bilinear elements $\{\phi^\pm_j\phi^\pm_k\}_{j,k \in \Zb}$. 

The fermionic representations of elements $h^\pm \in \SO(\tilde{\HH}^\pm, Q_\pm)$ in the connected component of the corresponding 
mutually commuting orthogonal  groups $\SO(\tilde{\HH}^\pm, Q_\pm)$, are defined by exponentiation \cite{DJKM1, DJKM2, vdLO}
\be
\hat{h}^\pm(A) := e^{\tfrac{1}{2}\sum_{j,k \in \Zb}  A_{jk} \no{\phi^\pm_j \phi^\pm_{k}} },
\label{hat_h_alpha_pm_exp_A}
\ee
 where $\{A_{jk}\}_{j,k \in \Zb}$ are the elements of a doubly infinite skew symmetric matrix $A$.
Let $\FF_{\phi^\pm}\ss \FF$ be the two subspaces of $\FF$ defined as
\be
\FF_ {\phi^+}:= \span\{|\hat{\alpha}^+ \big)\}, \quad 
\FF_ {\phi^-}:= \span\{|\hat{\alpha}^- \big)\}
 \label{FF_pm_def}
\ee
where
\be
|\alpha^+ \big) := \phi^+_{\alpha_1} \cdots \phi^+_{\alpha_r}|0 \rangle, \quad
|\alpha^- \big) := \phi^-_{\alpha_1} \cdots \phi^-_{\alpha_r}|0 \rangle.
 \label{alpha_pm}
\ee
Then $\FF_ {\phi^+}$ and $\FF_ {\phi^-}$ are invariant under the fermionic representations
(\ref{hat_h_alpha_pm_exp_A}) of the groups $\SO(\tilde{\HH}^\pm, Q_\pm)$.

The positive neutral fermion current components $J^{B+}_j$ and $\hat{J}^{B-}_j$ are defined as
\be
J^{B+}_j:=\tfrac 12 \sum_{k\in\mathbb{Z}} (-1)^{k+1}\phi^+_k \phi^+_{-k-j},\quad 
J^{B-}_j=\tfrac 12 \sum_{k\in\mathbb{Z}} (-1)^{k+1}\phi^-_k\phi^-_{-k-j}, \quad j\in \Nb^+,
\label{J_j_B_def}
\ee
from which it follows that the even components $\{J^{B\pm}_{2j}\}$ vanish, while the odd ones mutually commute,
\be
[J^{B+}_{2j-1}, J^{B+}_{2k-1}]  = 0,\quad [J^{B-}_{2j-1}, J^{B-}_{2k-1}]=0,\quad
[J^{B+}_{2j-1}, J^{B-}_{2k-1}] =0, \quad j,k \in \Nb^+ .
\label{JB_pm_crs}
\ee

It also follows, from  (\ref{phi_pm_j_01vac_r}), that the neutral current components $J^{B\pm}_j$  
 annihilate the vacua $|0\rangle$ and $|1\rangle$,
\be
J^{B+}_j|0 \rangle = 0, \quad J^{B-}_j|0 \rangle =0, \quad J^{B+}_j|1 \rangle = 0, \quad J^{B-}_j|1 \rangle, \quad j \in \Nb^+, 
\label{J_B_vac}
\ee
and, from (\ref{J_j_def}),  (\ref{J_j_B_def}),  that:
\bl
For  odd $n= 2j-1$, we have
\be
 J_{2p-1}=J^{B+}_{2j-1}+J^{B-}_{2j-1}, \quad j \in \Nb^+.
 \label{J_J_Bpm}
\ee
\el

The  BKP flow parameters are denoted
\be
{\bf t}_B:=(t_1, t_3, \dots ),
\label{BKP_flow_params}
\ee
and these may be viewed as determining a subset  $\{\tb'\} $ of the $\tb$'s , where
\be
\tb':=(t_1,0,t_3,0,t_5,\dots).
\label{t_prime_def}
\ee
Following \cite{DJKM1, DJKM2,  You, JM, vdLO}, we  define  two mutually commuting abelian groups of BKP flows 
\hbox{$\Gamma^{B+} = \{\gamma^{B+}({\tb}_{B+})\}$} and 
\hbox{$\Gamma^{B-} = \{\gamma^{B-}({\tb}_B)\}$}, 
with Clifford representations
\be
\hat{\gamma}^{B+}({\tb}_B) :=  e^{\sum_{j=0}^\infty J^{B+}_{2j-1} t_{2j-1} } ,
 \quad \hat{\gamma}^{B-}({\tb}_B) := e^{\sum_{j=1}^\infty J^{B-}_{2j-1} t_{2j-1} }.
 \label{hat_gamma_B_pm}
\ee
By (\ref{J_B_vac}), these  both stabilize the vacua $|0\rangle$ and $| 1\rangle$
\bea
\hat{\gamma}^{B+}({\tb}_B)|0\rangle &\&= |0\rangle, \quad \hat{\gamma}^{B+}({\tb}_B)|1\rangle = |1\rangle, \cr
\hat{\gamma}^{B-}({\tb}_B)|0\rangle &\&= |0\rangle, \quad \hat{\gamma}^{B-}({\tb}_B)|1\rangle = |1\rangle.
\label{J_sum_J+-}
\eea
It also follows that
\be
\hat{\gamma}_+({\bf t}') = \hat{\gamma}^{B+}_+({\bf t}_B)  \hat{\gamma}^{B-}_+({\bf t}_B) .
\label{gamma_factoriz}
\ee
The BKP $\tau$-function, defined as
\be
\tau^{BKP}_h({\bf t}_B) = \langle 0| \hat{\gamma}^{B+}({\bf t}_B) \hat{h}^+ | 0 \rangle = \langle 0| \hat{\gamma}^{B-}({\bf t}_B) \hat{h}^- | 0 \rangle,
\label{BKP_tau_fermi}
\ee
then satisfies the BKP Hirota bilinear residue equations \cite{DJKM1, DJKM2, JM}.

More generally, $\hat{h^\pm}$ need not be $\SO(\tilde{\HH}^\pm)$ group elements for
(\ref{BKP_tau_fermi}) to define a BKP $\tau$-function.
Define $\II_B^\pm\in \End(\FF_{\phi^\pm} \otimes \FF_{\phi^\pm})$ as
\be
\II_B^\pm := \sum_{j\in \Zb}(-1)^j \phi^\pm_j  \otimes \phi^\pm_{-j} ,
\label{II_B}
\ee
and let $P^\pm \in  \End(\FF_{\phi}^{\pm})$ be  defined on the bases $\{|\alpha^\pm)\}$ as
\be
P^{\pm}|\alpha^\pm) := (-1)^{m(\alpha^\pm)}|\alpha^\pm), 
\label{FF_partity_op}
\ee
where $m(\alpha^\pm)$ is the length (i.e.~the cardinality) of $\alpha^\pm$.
Then  \ref{BKP_tau_fermi}) defines a BKP $\tau$-function \cite{HB, DJKM1, DJKM2, JM} provided the bilinear relations 
  \be
\II_B^\pm  | v^\pm) \otimes |v^\pm ) =   (\phi_0 \otimes \phi_0)P^\pm | v^\pm ) \otimes  P^\pm| v^\pm ).
\label{neutral_fermion_bilinear}
\ee
are satisfied by the elements 
\be
|v^\pm) = \hat{h}^\pm|0\rangle.
\label{decomp_FF_N_pm)el}
\ee

We also have the following key factorization lemma \cite{HO, You, JM}.
 \begin{lemma}[\bf Factorization]
 \label{factorization_lemma}
 If $U^+$ and $U^-$ are either even or odd degree elements of the subalgebra generated by the operators
 $\{\phi^+_i\}_{i \in \Zb}$ and  $\{\phi^-_i\}_{i \in \Zb}$ respectively, the VEV of their product can be factorized as:
\be
\langle 0 | U^+ U^-|0\rangle =
\begin{cases}
\langle 0 |  U^+|0\rangle \langle 0| U^- |0\rangle 
&  
\text{ if } U^+ \text{ and } U^-\text{ are both of even degree}\\
0  &  \text{ if } U^+ \text{ and }  U^- \text{ have different  parity }\\ 
2i \langle 0|U^+\phi^+_0|0\rangle \langle 0|U^-\phi^-_0 |0\rangle &  
\text{  if }  U^+ \text{ and }  U^+ \text{ are both of odd degree}.
\end{cases}
\label{gen_factorization}
\ee
\end{lemma}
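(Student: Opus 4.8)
The plan is to compute the vacuum expectation value by Wick's theorem, after isolating the single source of coupling between the two neutral-fermion families, namely their zero modes. First I would reduce to the case that $U^+$ and $U^-$ are monomials: by (\ref{neutral_ccr}) we have $(\phi^\pm_j)^2=0$ for $j\neq 0$, while $(\phi^\pm_0)^2=\tfrac12$ by (\ref{square_phi_0}), so every monomial equals a scalar multiple of an ordered product of \emph{distinct} generators in which $\phi^+_0$ and $\phi^-_0$ each occur at most once. By linearity it then suffices to treat such reduced homogeneous monomials, whose degrees (parities) are well defined.

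The mixed-parity case is settled by a grading argument. Each $\phi^\pm_j$ is, by (\ref{neutral_charged}), a linear combination of $\psi_j$ and $\psi^\dag_{-j}$, hence shifts the fermionic charge by $\pm 1$; a product of degree $d$ therefore maps $\FF_0$ into $\bigoplus_{n\equiv d\,(2)}\FF_n$. Since $\langle 0|\in\FF_0^*$, the value $\langle 0|U^+U^-|0\rangle$ can be nonzero only when $\deg U^+ + \deg U^-$ is even, which yields the vanishing in the second case.

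The heart of the matter is the contraction structure. The key computation is that the only nonvanishing two-point function mixing the two families is
\be
\langle 0|\phi^+_j\phi^-_k|0\rangle = \tfrac{i}{2}\,\delta_{j0}\delta_{k0},
\label{cross_contraction}
\ee
which follows from (\ref{phi_pm_0vac_r})--(\ref{phi_pm_0vac_l}) together with (\ref{square_phi_0}) (one uses $\phi^-_0|0\rangle=i\phi^+_0|0\rangle$ to get $\langle 0|\phi^+_0\phi^-_0|0\rangle = i\langle 0|(\phi^+_0)^2|0\rangle=\tfrac{i}{2}$), and one also checks $\langle 0|\phi^\pm_0\phi^\pm_k|0\rangle=0$ for $k\neq 0$. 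Equivalently, recombining the zero modes as $\psi_0=\tfrac{1}{\sqrt 2}(\phi^+_0-i\phi^-_0)$ and $\psi^\dag_0=\tfrac{1}{\sqrt 2}(\phi^+_0+i\phi^-_0)$, the operators $\{\phi^\pm_j\}_{j\neq 0}\cup\{\psi_0,\psi^\dag_0\}$ all annihilate $|0\rangle$ on one side (by (\ref{phi_pm_j_01vac_r}), (\ref{phi_pm_j_01vac_l}) and $\psi^\dag_0|0\rangle=0$, $\langle 0|\psi_0=0$), so that $|0\rangle$ is a genuine Fock vacuum for them and Wick's theorem applies verbatim. The contractions then split into three mutually non-interacting blocks---the $\phi^+$ modes, the $\phi^-$ modes, and the charged pair $\{\psi_0,\psi^\dag_0\}$---so that every surviving pairing is internal to a block.

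With this in hand the two equal-parity cases follow by counting. In the even--even case, if neither factor contains a zero mode then each family has an even number of nonzero modes, which pair off internally; any use of the cross-contraction (\ref{cross_contraction}) would strand an odd number of nonzero modes in each family with no admissible partner, so only the fully internal pairings survive and $\langle 0|U^+U^-|0\rangle=\langle 0|U^+|0\rangle\langle 0|U^-|0\rangle$. If a zero mode does occur in a factor, its family then carries an odd number of nonzero modes, and one checks that both the left-hand side and the product $\langle 0|U^+|0\rangle\langle 0|U^-|0\rangle$ vanish, so the identity persists. In the odd--odd case the situation is reversed: internal pairing is obstructed in each family, so every surviving pairing must consume the cross-contraction, forcing $\phi^+_0$ and $\phi^-_0$ to be present and contracted together. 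Writing $U^+=c\,W^+\phi^+_0$ and $U^-=d\,\phi^-_0 V^-$ with $W^+,V^-$ of even degree and $c,d\in\{\pm1\}$, I would bring the two zero modes adjacent, peel them off via $\langle 0|\phi^+_0\phi^-_0|0\rangle=\tfrac{i}{2}$, and invoke the already-established even--even factorization on $W^+V^-$; comparing with $\langle 0|U^+\phi^+_0|0\rangle=\tfrac{c}{2}\langle 0|W^+|0\rangle$ and $\langle 0|U^-\phi^-_0|0\rangle=\tfrac{d}{2}\langle 0|V^-|0\rangle$ (both from $(\phi^\pm_0)^2=\tfrac12$) reproduces the stated right-hand side, the constant being pinned down as $2i=(\tfrac{i}{2})/(\tfrac12)^2$. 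The main obstacle throughout is precisely these zero modes $\phi^\pm_0$: they neither create nor annihilate $|0\rangle$ on their own, so Wick's theorem does not apply to them directly, and the real content of the lemma is the careful bookkeeping that produces the block decoupling (\ref{cross_contraction}) and the normalization $2i$ in the odd--odd case.
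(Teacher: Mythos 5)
Your proof is correct, but there is nothing in the paper to compare it against: Lemma \ref{factorization_lemma} is stated without proof, the authors simply citing \cite{HO, You, JM} for it. What you have written is therefore a self-contained substitute for an imported result. Your route --- reduce by bilinearity to monomials, dispose of the mixed-parity case via the charge grading, then apply Wick's theorem after isolating the zero modes --- is sound, and the two computations that carry all the weight are both right: the only contraction coupling the two families is $\langle 0|\phi^+_0\phi^-_0|0\rangle=\tfrac{i}{2}$ (all other cross two-point functions vanish because $[\phi^+_j,\phi^-_k]_+=0$ and one factor annihilates the vacuum on the appropriate side), and the recombination $\psi_0=\tfrac{1}{\sqrt{2}}(\phi^+_0-i\phi^-_0)$, $\psi^\dag_0=\tfrac{1}{\sqrt{2}}(\phi^+_0+i\phi^-_0)$ turns every generator into a pure creator or pure annihilator, which is exactly what legitimizes Wick's theorem despite $\phi^\pm_0$ fixing neither vacuum. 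The normalization $2i=(\tfrac{i}{2})/(\tfrac{1}{2})^2$ checks out on the minimal case $U^\pm=\phi^\pm_0$, where both sides of (\ref{gen_factorization}) equal $\tfrac{i}{2}$.

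Two details should be tightened, though neither is a genuine gap. First, a monomial in the generators is in general a linear combination --- not a scalar multiple --- of ordered monomials in distinct generators: by (\ref{neutral_ccr}), $\phi^+_{-1}\phi^+_1=-1-\phi^+_1\phi^+_{-1}$. Since the anticommutation relations preserve parity and both sides of (\ref{gen_factorization}) are bilinear in $(U^+,U^-)$, the reduction still goes through, but it should be phrased that way. Second, in the odd--odd case you establish the identity only when $\phi^+_0$ occurs in $U^+$ and $\phi^-_0$ in $U^-$; when a zero mode is absent you show the left-hand side vanishes but say nothing about the right-hand side. The same pairing obstruction closes this immediately: if the odd monomial $U^+$ contains no $\phi^+_0$, then in $\langle 0|U^+\phi^+_0|0\rangle$ the appended $\phi^+_0$ has no nonvanishing partner (its contraction with every nonzero mode vanishes), so that factor --- and hence the right-hand side --- is zero as well.
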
 

Defining
\be
g(h(A)) = h^+(A)h^-(A), \quad \hat{g}(h(A)) :=\hat{h}^+(A)\hat{h}^-(A)
\label{g_h_def}
\ee
and using the relations (\ref{neutral_charged}) between neutral and charged fermions and the mutual commutativity
of the algebras $\gro(\tilde{\HH}^\pm,  Q_\pm)$, we can re-express $\hat{g}$ as the fermionic representation
of an element of $\GL(\HH)$
\be
\hat{g}(\tilde{A})=\hat{g}(h(A)) = e^{\sum_{j,k \in \Zb} (-1)^kA_{j, -k} \no{\psi_j \psi^\dag_k}}
\label{fermi_rep_g_h}
\ee
where
\be
\tilde{A} = (-1)^k A_{j, -k}
\label{tilde_A_A_skew}
\ee
 Comparing formulae (\ref{KP_tau_fermi}) and  (\ref{BKP_tau_fermi}) for the KP $\tau$-function
 $\tau^{KP}_{g(h)}({\bf t}')$ and the BKP $\tau$-function  $\tau^{BKP}_h({\bf t}_B)$, using
(\ref{gamma_factoriz}) and (\ref{g_h_def}) and Lemma \ref{factorization_lemma}, we obtain the
following well-known result \cite{DJKM1,  DJKM2}
\be
\tau^{KP}_{g(h)}({\bf t}') = \left(\tau^{BKP}_h({\bf t}_B) \right)^2.
\label{tau_KP_g_h}
\ee

The next section is aimed at obtaining the general bilinear relation expressing  a lattice of KP $\tau$-functions,  
labelled by partitions, associated with  $\tau^{KP}_{g(h)}({\bf t}')$,  as sums over products of the elements  
of a lattice of BKP $\tau$-functions, lablled by strict partitions, similarly associated with $\tau^{BKP}_h({\bf t}_B)$

%%%%% Section  3  Bilinear expansions of KP $\tau$ functions in BKP $\tau$-functions %%%%%%%%%%%%%%%%%%%
\section{Bilinear expansions of KP $\tau$-function lattices in BKP $\tau$-functions}
\label{ bilnear_KP_BKP}
In this section, we define lattices of KP and BKP $\tau$-functions related to a given pair 
\hbox{$\tau^{KP}_{g(h)}(\tb)$, $\tau^{BKP}_h({\bf t}_B)$}, labelled by partitions and strict partitions, 
respectively, and derive a bilinear expansion for the first of these in terms of the second.

%%%%%%%%%%%%%%%%%%%% Subsection  3.1  Lattices of KP and BKP $\tau$ -unctions %%%%%%%%%%%%%%%%%%%

\subsection{Lattices of KP and BKP $\tau$-functions}
\label{KP_BKP_tau_lattice}
In the following, we view the KP and BKP flow variables ${\bf t}$ and ${\bf t}_B$, 
which are the arguments of our $\tau$-functions, as independent coordinates.
To make contact with the usual definitions of Schur functions and Schur $Q$-functions,
as symmetric polynomials, or symmetric functions of a finite or infinite set of variable $\{x_j\}_{j\in\Nb^+}$,
we recall that the variables ${\bf t}$ and ${\bf t}_B$ may be restricted to equal the (normalized) power sums
in these, 
\be
t_i = {1\over i} \sum_{j=1}^N{x_j^i},
\ee
(where N could be a finite positive integer or $\infty$).
We then denote the infinite sequence ${\bf t}=(t_1, t_2 \dots )$ as
\be
{\bf t}=  [{\bf x}] := ([{\bf x}]_1, [{\bf x}]_2, \dots), \quad [{\bf x}]_i :=   {1\over i} \sum_{j=1}^N{x_j^i}, \quad i \in \Nb^+.
\label{t_norm_power_sum}
\ee
However, when we specialize to the values $\tb = \tb'$ , in which all the even flow variables vanish,
we cannot interpret these as the usual power sums (\ref{t_norm_power_sum}). Instead, in order to
restrict these to symmetric polynomials, we use the {\em supersymmetric} parametrization \cite{Iv}
\be
\tb' = [{\bf y}] - [-{\bf y}], \quad t_{2i-1} = {2\over 2i-1} \sum_{j =1}^Ny_j^{2i-1},
\label{t_prime_y_sum}
\ee
where ${\bf y} =(y_1, y_2, \dots)$ is an auxiliary (finite or infinite) sequence of independent variables.

For any KP $\tau$-function $\tau^{KP}_g(\tb)$, we associate an $A_\infty$ lattice of KP $\tau$-functions
\be
\pi_{(\alpha|\beta)}(g) ({\bf t}):= \langle 0 | \hat{\gamma}_+({\bf t}) \hat{g} | \lambda \rangle,
\label{pi_alpha_beta}
\ee
labelled by partitions $\lambda =(\alpha|\beta)$  and, for any   BKP $\tau$-function $\tau^{BKP}_h(\tb_B)$,
a $B_\infty$ lattice of BKP $\tau$-functions
\be
\kappa_{\alpha} (h)({\bf t}_B) := \langle 0 | \hat{\gamma}^{B+}({\bf t}_B) \hat{h}^+ | \alpha^+ \big) =
 \langle 0 | \hat{\gamma}^{B-}({\bf t}_B) \hat{h}^- | \alpha^- \big),
 \label{kappa_alpha}
\ee
labelled by strict partitions $\alpha$ of even cardinality $r$.
\begin{remark}
Note that if we change the elements $\hat{g} \ra \hat{g} \hat{g}_0 $ and 
$\hat{h}^\pm \ra \hat{h} \hat{h}_0 $ in (\ref{pi_alpha_beta}) and (\ref{kappa_alpha})  by right multiplication by elements 
$\hat{g}_0$,  $\hat{h}_0$ of the stabilizer of the vacuum $|0\rangle$, this has no effect
on the $\tau$-functions $\tau^{KP}_g(\tb)$ and $\tau^{BKP}_h(\tb_B)$, but
will change the other elements $\pi_{(\alpha|\beta)}(g)(\tb)$, $\kappa_{\alpha} (h)({\bf t}_B) $
of the associated lattices.

\end{remark}

By Wick's theorem, $\pi_{(\alpha|\beta)}(g) ({\bf t})$ is the determinant of the matrix 
$\pi_{(\alpha_j|\beta_k)}(g)({\bf t}) $ corresponding to hook partitions $(\alpha_j|\beta_k)$ 
\be
\pi_{(\alpha|\beta)}(g)({\bf t}) = \det\left(\pi_{(\alpha_j|\beta_k)}(g)({\bf t}) \right)_{1\le j, k\le r}, 
\label{pi_alpha_beta_det}
\ee
and  $\kappa_{\alpha} (h)({\bf t}_B)$ is the Pfaffian of the skew matrix $\kappa_{(\alpha_j, \alpha_k)} (h)({\bf t}_B)$
corresponding to the Frobenius rank $r=2$ strict partitions $\alpha = (\alpha_j, \alpha_k)$:
\be
\kappa_{\alpha} (h)({\bf t}_B)  = \Pf\left( \kappa_{(\alpha_j, \alpha_k)} (h)({\bf t}_B)\right)_{1\le j, k\le r}.
\label{kappa_alpha_pfaff}
\ee
where, for $\alpha_k\ge\alpha_j$, $\kappa_{(\alpha_j, \alpha_k)(h) ({\bf t}_B)} $ is  defined  by
\be
\kappa_{(\alpha_j, \alpha_k)}(h)({\bf t}_B)  := - \kappa_{(\alpha_k, \alpha_j)}(h)({\bf t}_B) .
\label{kappa_alpha_def}
\ee

%%%%%%%%%%%%%%%%% Subsection  3.2 Polarizations %%%%%%%%%%%%%%%%%%
\subsection{Polarizations}
\label{polarizations}

\begin{definition}
A {\em polarization} of $(\alpha|\beta)$,  is a pair  $(\mu^+, \mu^-)$
of strict partitions with cardinalities  (or {\em lengths})
\be
m(\mu^+):= \#(\mu^+), \quad  m(\mu^-)):= \#(\mu^-)
\label{m_pm_def}
\ee
(including  possibly a zero part $\mu^+_{m(\mu^+)}=0$ or $\mu^-_{m(\mu^-)}=0$), satisfying
\be
\mu^+ \cap \mu^- = \alpha \cap I(\beta), \quad \mu^+ \cup \mu^- = \alpha \cup I(\beta),
\label{mu_pm_cap_cup}
\ee
where
\be
I(\beta) := (I_1(\beta), \dots I_r(\beta))
\label{I_beta}
\ee
is the strict partition \cite{Mac} with parts
\be
I_j(\beta) = \beta_j +1, \quad j=1, \dots r.
\label{I_j_beta}
\ee
The set of all polarizations of $(\alpha|\beta)$  is denoted $\PP(\alpha,\beta)$.
\end{definition}
We denote the strict partition obtained by intersecting  $\alpha$ with $I(\beta)$ as
\be
S := \alpha \cap I(\beta),
\label{S_def}
\ee
and its cardinality as
\be
s := \#(S).
\ee
Since both $\alpha$ and $I(\beta)$ have cardinality $r$, it follows that
\be
m(\mu^+) + m(\mu^-) = 2r,
\label{m_pm_sum}
\ee
so $m(\mu^\pm)$ must have the same parity.   It is easily verified \cite{HO} that the cardinality of $\PP(\alpha, \beta)$ is $2^{2r -2s}$.
The following was proved in \cite{HO}.
\begin{lemma}[\bf Binary sequence associated to a polarization]
For every polarization $\mu:=(\mu^+, \mu^-)$ of $\lambda=(\alpha| \beta)$ there is a unique binary sequence
of length $2r$
\be 
\epsilon(\mu) =(\epsilon_1(\mu), \dots, \epsilon_{2r}(\mu)),
\label{epsilon_mu_binary}
\ee
with
\be
\epsilon_j(\mu) =\pm, \quad j=1, \dots 2r,
\label{epsilon_j_binary}
\ee
such that 
\begin{enumerate}
\item The sequence of pairs
\be
((\alpha_1, \epsilon_1(\mu)), \cdots  (\alpha_r, \epsilon_r(\mu)), (\beta_1+1, \epsilon_{r+1}(\mu)), \dots , ( \beta_r+1, \epsilon_{2r}(\mu))
\label{alpha_beta_sequence}
\ee
is a permutation of the sequence
\be
((\mu^+_1,+), \cdots ( \mu^+_{m^+(\mu)},+), (\mu^-_1, -),  \dots  , (\mu^-_{m^-(\mu)}, -)
\label{mu_sequence}
\ee
\item
\be
 \epsilon_j(\mu)= + \ \text{ if } \alpha_j \in S, \ \text{ and } \epsilon_{r+j}(\mu)= - \ \text{ if } \beta_j +1\in S, \quad j=1, \dots, r.
 \label{canon_epsilon_constraint}
 \ee
\end{enumerate}
\end{lemma}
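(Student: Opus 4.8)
The plan is to reduce the claim to a bookkeeping of the multiset of $2r$ values $\{\alpha_1,\dots,\alpha_r\}\cup\{\beta_1+1,\dots,\beta_r+1\}$, reading off the sign attached to each value from its membership in $\mu^+$ or $\mu^-$. First I would record the structural consequences of the polarization conditions (\ref{mu_pm_cap_cup}). Since $\alpha$ and $I(\beta)$ are strict partitions of cardinality $r$ meeting in $S=\alpha\cap I(\beta)$, with $\#(S)=s$, the combined list of $2r$ values contains each element of $S$ exactly twice, once among the $\alpha_j$ and once among the $\beta_k+1$, and each element of the symmetric difference $\alpha\triangle I(\beta)$ exactly once. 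Combining the two identities in (\ref{mu_pm_cap_cup}) gives $\mu^+\triangle\mu^-=\alpha\triangle I(\beta)$, so $\mu^+$ and $\mu^-$ both contain all of $S$ while every value in the symmetric difference lies in exactly one of $\mu^+,\mu^-$. This is the combinatorial fact that makes the sign of each value outside $S$ unambiguous.

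Next I would construct $\epsilon(\mu)$ explicitly. At a position $j$ with $\alpha_j\in S$ I set $\epsilon_j=+$, and at a position $r+j$ with $\beta_j+1\in S$ I set $\epsilon_{r+j}=-$; this is exactly the requirement (\ref{canon_epsilon_constraint}). At each remaining position, whose value lies in $\alpha\triangle I(\beta)$, I assign $+$ if that value belongs to $\mu^+$ and $-$ if it belongs to $\mu^-$, which is well defined because precisely one alternative holds. I would then verify property (1) by comparing multiplicities of each signed pair $(x,\pm)$ on the two sides. For $x\in S$ its two occurrences yield $(x,+)$ and $(x,-)$ once each, matching the pairs arising from $x\in\mu^+\cap\mu^-$ in (\ref{mu_sequence}); for $x$ in the symmetric difference its single occurrence yields exactly the pair dictated by its membership in $\mu^+$ or $\mu^-$. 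Hence (\ref{alpha_beta_sequence}) is a reordering of (\ref{mu_sequence}).

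For uniqueness I would argue that any $\epsilon$ satisfying (1) and (2) coincides with this construction. Condition (2) pins down the signs at all $S$-positions, while condition (1) forces the sign at each symmetric-difference position to be the one carried by the unique signed copy of that value in (\ref{mu_sequence}). Since each value of $\alpha\triangle I(\beta)$ occurs once on each side, no freedom remains and $\epsilon(\mu)$ is uniquely determined.

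The step I expect to be the main obstacle is the consistency underlying existence: one must confirm that imposing the canonical signs on $S$ by (\ref{canon_epsilon_constraint}), namely $+$ on the $\alpha$-copy and $-$ on the $I(\beta)$-copy of each $x\in S$, is compatible with the multiset equality in (1), i.e.\ that these two copies genuinely receive opposite signs and thereby reproduce the pair $(x,+),(x,-)$ demanded by $x\in\mu^+\cap\mu^-$. A secondary point to monitor is the zero-part convention: when $m(\mu^+)$ and $m(\mu^-)$ are both odd one appends a zero part to each, and I would check that such zero parts do not disrupt the matching, the only possible zero entry in (\ref{alpha_beta_sequence}) being $\alpha_r$, since $\beta_j+1\ge 1$ always.
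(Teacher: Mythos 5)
Your proof is correct and complete: the key observation that the multiset $\alpha\sqcup I(\beta)$ consists of the elements of $S$ twice and the elements of $\alpha\,\triangle\, I(\beta)$ once, hence coincides with $\mu^+\sqcup\mu^-$, with condition (2) pinning the signs on the two copies of each element of $S$ and condition (1) forcing the sign at each symmetric-difference position, yields both existence and uniqueness, and your treatment of the possible zero part (only $\alpha_r$ can vanish, and $0\notin S$) is right. Note that this paper does not actually prove the lemma but cites \cite{HO} for it; your multiset bookkeeping is essentially the same elementary argument as in that reference, so there is no divergence of approach to report.
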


\begin{definition}
 The {\em sign} of the polarization $(\mu^+, \mu^-)$, denoted $\sgn(\mu)$, is defined 
as the sign of the permutation  that takes the sequence (\ref{alpha_beta_sequence})  into the sequence
(\ref{mu_sequence}).
\end{definition}
Denote by
\be
\pi(\mu^\pm) :=\#(\alpha \cap \mu^{\pm})
\label{pi_pm_mu_def}
\ee
the cardinality of the intersection of $\alpha$ with $\mu^{\pm}$. It follows that
\be
\pi(\mu^+) + \pi(\mu^-) = r+s. 
\label{pi_pm_sum}
\ee
 We then have
 \begin{lemma}
\be
|\lambda \rangle =  {(-1)^{\tfrac{1}{2}r(r+1) + s}\over 2^{r-s}} \sum_{\mu\in \PP(\alpha, \beta)}
\sgn(\mu)(-1)^{\pi(\mu^-)} i^{ m(\mu^-)}\prod_{j=1}^{m(\mu^+)} \phi^+_{\mu^+_j} \prod_{k=1}^{m(\mu^-)} \phi^-_{\mu^-_k}  |0 \rangle.
\label{polariz_sum}
\ee
\end{lemma}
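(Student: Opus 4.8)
The plan is to reduce everything to the Frobenius presentation (\ref{lambda_frob_vac}) of $|\lambda\rangle$ and then pass from charged to neutral generators. First I would invert the defining relations (\ref{neutral_charged}) to obtain
\[
\psi_j = \tfrac{1}{\sqrt 2}\left(\phi^+_j - i\phi^-_j\right), \qquad \psi^\dag_{-j} = \tfrac{(-1)^j}{\sqrt 2}\left(\phi^+_j + i\phi^-_j\right),
\]
and substitute these (with $j=\alpha_k$ and with $j=\beta_k+1=I_k(\beta)$ respectively) into (\ref{lambda_frob_vac}). This writes $|\lambda\rangle$ as $(-1)^{\sum_k\beta_k}\,2^{-r}$ times a product of $2r$ factors, each a sum of two neutral generators, acting on $|0\rangle$. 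The $\psi^\dag$ factors also supply the overall sign $\prod_k(-1)^{\beta_k+1}=(-1)^{\sum_k\beta_k+r}$, so the $\beta$-dependent signs collapse to $(-1)^r$.

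Next I would reorder the interleaved product $\psi_{\alpha_1}\psi^\dag_{-\beta_1-1}\cdots$ into the grouped order of (\ref{alpha_beta_sequence}), all $\alpha$-factors first and all $I(\beta)$-factors after. Since every index occurring is $\geq 0$ and each $I_k(\beta)\geq 1$, all cross-anticommutators $[\phi^\epsilon_{\alpha_j},\phi^{\epsilon'}_{I_k(\beta)}]_+$ vanish, so each interchange is a genuine degree-one anticommutation and the reordering contributes the fixed sign $(-1)^{r(r-1)/2}$. I would then expand the $2r$ factors, indexing the $2^{2r}$ resulting monomials by sign sequences $\epsilon\in\{+,-\}^{2r}$ on the positions of (\ref{alpha_beta_sequence}), where the value at a position together with $\epsilon_l$ records whether that value enters $\mu^+$ or $\mu^-$. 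Because $(\phi^\pm_j)^2=0$ for $j\neq 0$, and because a repeated value occurs exactly for $v\in S=\alpha\cap I(\beta)$ (both $\alpha$ and $I(\beta)$ being strict, with $0\notin I(\beta)$), a monomial survives precisely when the two occurrences of each $v\in S$ carry opposite signs; the surviving sequences are thus exactly those whose associated pair $(\mu^+,\mu^-)$ lies in $\PP(\alpha,\beta)$.

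The crux is to group the surviving monomials by polarization and to verify that the $2^s$ sequences yielding a given $\mu$, obtained by independently flipping the orientation of each of the $s$ pairs in $S$, all contribute \emph{identically}. I would check this one pair at a time: for $v\in S$ with occurrences at a position in the $\alpha$-block and one in the $I(\beta)$-block, flipping the orientation changes the product of the two relevant scalar coefficients from $\tfrac{i}{2}$ to $-\tfrac{i}{2}$, while it changes the operator string from $A\phi^+_vB\phi^-_vC$ to $A\phi^-_vB\phi^+_vC$; since these equal $(-1)^dA\phi^+_v\phi^-_vBC$ and $(-1)^{d+1}A\phi^+_v\phi^-_vBC$ (with $d$ the degree of the intervening string $B$), the operator strings differ by exactly $-1$. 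The two sign changes cancel, so each $\mu$ receives $2^s$ equal contributions, turning the prefactor $2^{-r}$ into $2^{-(r-s)}$, the power appearing in (\ref{polariz_sum}). Taking the canonical orientation (\ref{canon_epsilon_constraint}) as representative, the final reordering of (\ref{alpha_beta_sequence}) into (\ref{mu_sequence}) produces, by definition, the factor $\sgn(\mu)$.

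Finally I would collect the scalars for the canonical representative, which is the most delicate bookkeeping. Each $\phi^-$ chosen at an $\alpha$-position contributes $-i$ and each at an $I(\beta)$-position contributes $+i$; writing $a^-,b^-$ for their numbers, the total is $(-1)^{a^-}i^{m(\mu^-)}$ since $a^-+b^-=m(\mu^-)$. In the canonical orientation every $v\in S$ supplies its $\phi^-$ at an $I(\beta)$-position, so $a^-$ counts precisely $\mu^-\cap(\alpha\setminus S)$ and hence $\pi(\mu^-)=\#(\mu^-\cap\alpha)=a^-+s$, giving $(-1)^{a^-}=(-1)^{\pi(\mu^-)+s}$. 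Combining this with the earlier $(-1)^r\cdot(-1)^{r(r-1)/2}=(-1)^{r(r+1)/2}$ yields the overall coefficient $(-1)^{r(r+1)/2+s}\sgn(\mu)(-1)^{\pi(\mu^-)}i^{m(\mu^-)}$, matching (\ref{polariz_sum}). I expect the main obstacle to be exactly this sign accounting, in particular the cancellation rendering the $2^s$ orientations equal and the identity $\pi(\mu^-)=a^-+s$, rather than any conceptual difficulty.
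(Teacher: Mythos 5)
Your proof is correct and takes essentially the same route as the paper's: both start from the Frobenius representation (\ref{lambda_frob_vac}), reorder the $\psi_{\alpha_j}$ and $\psi^\dag_{-\beta_j-1}$ factors at the cost of $(-1)^{\tfrac{1}{2}r(r-1)}$, substitute the inverted relations from (\ref{neutral_charged}), and expand into neutral-fermion monomials grouped by polarization, with the $2^s$ equal orientations and the powers of $-1$ and $i$ yielding the prefactor in (\ref{polariz_sum}). The only difference is one of detail, not of method: you explicitly verify the steps the paper merely asserts, namely that the $2^s$ orientations of each pair in $S$ give identical contributions and that the scalar count reduces to $(-1)^{\pi(\mu^-)+s}i^{m(\mu^-)}$ via $\pi(\mu^-)=a^-+s$.
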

\begin{proof}
 In  eq.~(\ref{lambda_frob_vac}), reorder the product over the factors $\psi_{\alpha_j} \psi^\dag_{-\beta_j-1}$,   so 
 the  $\psi_{\alpha_j}$ terms precede the $\psi^\dag_{-\beta_j-1}$ ones, giving an overall sign 
 factor $(-1)^{\tfrac{1}{2}r(r-1)}$.  Then substitute 
 \be
\psi_{\alpha_j} = \tfrac{1} {\sqrt{2}}( \phi^+_{\alpha_j} - i \phi^-_{\alpha_j}), \quad
 \psi^\dag_{-\beta_j-1}= \tfrac{(-1)^j}{ \sqrt{2}} (\phi^+_{\beta_j+1} + i \phi^-_{\beta_j+1}), \quad j\in \Zb,
\ee
which follows from (\ref{neutral_charged}),
for all the factors, and expand  the product as a sum over monomial terms of the form
\be
 \prod_{j=1}^{m(\mu^+)} \phi^+_{\mu^+_j} \prod_{k=1}^{m(\mu^-)} \phi^-_{\mu^-_k}  |0 \rangle.
 \ee
 Taking into account the sign factor $\sgn(\mu)$ corresponding to the order of the neutral fermion
factors, as well as the powers of $-1$ and $i$, and noting that there are $2^s$  resulting identical terms, 
then gives (\ref{polariz_sum}).
\end{proof}

\begin{definition}{\bf Supplemented partitions.}
If $\nu$ is a strict partition of cardinality $m(\nu)$ (with $0$ allowed as a part), 
we define the associated {\em supplemented partition} $\hat{\nu}$ to be
\be
\hat{\nu} := \begin{cases}  \nu \ \text{ if } m(\nu) \ \text{ is even}, \cr
                  (\nu,0)  \  \text{ if } m(\nu) \  \text{ is odd}.
                  \end{cases}
 \label{hat_nu}
\ee
We denote by $m(\hat{\nu})$  the cardinality of $\hat{\nu}$.
\end{definition}

%%%%%%%%%%%%%%%%%%%  Subsection  3.3 Bilinear expansion theorem %%%%%%%%%%%
\subsection{Bilinear expansion theorem}
\label{KP_BKP_bilinear}

Combining the above, we arrive at our main  result. 
\begin{theorem}
\label{KP_BKP_lattice_bilinear}
Restricting the group element $\hat{g}$ to $\hat{g}(h)=\hat{h}^+ \hat{h}^-$ and the $\tb$ values to $\tb'$, 
the lattice of KP $\tau$-functions $\pi_{(\alpha|\beta)}(g(h))({\bf t}')$ may be expressed as sums over products
of pairs of BKP $\tau$-functions $\kappa_{\hat{\mu}^+} (h)({\bf t}_B)$ and $\kappa_{\hat{\mu}^-} (h)({\bf t}_B)$ as follows
\be
\pi_{(\alpha|\beta)}(g(h))({\bf t}')= {(-1)^{\tfrac{1}{2}r(r+1) + s}\over 2^{r-s}}\sum_{\mu\in \PP(\alpha, \beta)}
\sgn(\mu)(-1)^{\pi(\mu^-) +\tfrac{1}{2}m(\hat{\mu}^-)}\kappa_{\hat{\mu}^+} (h)({\bf t}_B)\kappa_{\hat{\mu}^-} (h)({\bf t}_B).
\label{bilinear_exp}
\ee
\end{theorem}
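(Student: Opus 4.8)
The plan is to begin from the fermionic definition
\[
\pi_{(\alpha|\beta)}(g(h))(\tb')=\langle 0|\hat\gamma_+(\tb')\,\hat g(h)\,|\lambda\rangle,\qquad \lambda=(\alpha|\beta),
\]
and to insert the three structural facts already in place: the restriction $\hat g(h)=\hat h^+\hat h^-$ (eq.~(\ref{g_h_def})), the current factorization $\hat\gamma_+(\tb')=\hat\gamma^{B+}(\tb_B)\hat\gamma^{B-}(\tb_B)$ (eq.~(\ref{gamma_factoriz})), and the polarization expansion (\ref{polariz_sum}) of $|\lambda\rangle$. Substituting and using linearity reduces the whole statement to evaluating, for each polarization $\mu=(\mu^+,\mu^-)\in\PP(\alpha,\beta)$, the single vacuum expectation value
\[
V_\mu:=\langle 0|\,\hat\gamma^{B+}\hat\gamma^{B-}\hat h^+\hat h^-\,\big(\textstyle\prod_{j}\phi^+_{\mu^+_j}\big)\big(\prod_{k}\phi^-_{\mu^-_k}\big)\,|0\rangle ,
\]
after which one reassembles the overall prefactor and the per-term weight $\sgn(\mu)(-1)^{\pi(\mu^-)}i^{m(\mu^-)}$ supplied by (\ref{polariz_sum}).

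First I would put $V_\mu$ into a form to which the Factorization Lemma \ref{factorization_lemma} applies. Since $\hat\gamma^{B+},\hat h^+$ are even in the $\phi^+$'s and $\hat\gamma^{B-},\hat h^-$ are even in the $\phi^-$'s, and a single $\phi^+$ commutes through any even monomial in the $\phi^-$'s (and conversely), I can transport all $\phi^+$-type operators to the left of all $\phi^-$-type operators without generating a sign, writing $V_\mu=\langle 0|U^+U^-|0\rangle$ with $U^+=\hat\gamma^{B+}\hat h^+\prod_{j}\phi^+_{\mu^+_j}$ and $U^-=\hat\gamma^{B-}\hat h^-\prod_{k}\phi^-_{\mu^-_k}$. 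By (\ref{m_pm_sum}) the lengths $m(\mu^+)$ and $m(\mu^-)$ share parity, so $U^+$ and $U^-$ have the same degree-parity; hence only the even--even and odd--odd branches of Lemma \ref{factorization_lemma} ever arise, and the ``different parity'' case is vacuous.

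In the even--even branch the Lemma gives $V_\mu=\langle 0|U^+|0\rangle\langle 0|U^-|0\rangle$; since $\prod_j\phi^+_{\mu^+_j}|0\rangle=|\mu^+)$ and here $\hat\mu^\pm=\mu^\pm$, comparison with (\ref{kappa_alpha}) identifies this with $\kappa_{\hat\mu^+}(h)(\tb_B)\,\kappa_{\hat\mu^-}(h)(\tb_B)$. In the odd--odd branch the Lemma contributes the factor $2i$ and inserts $\phi^+_0$ and $\phi^-_0$; the key observation is that $\prod_j\phi^+_{\mu^+_j}\,\phi^+_0|0\rangle=|\hat\mu^+)$ is exactly the state of the supplemented partition (\ref{hat_nu}) obtained by appending the zero part, so the two factors are once more $\kappa_{\hat\mu^+}$ and $\kappa_{\hat\mu^-}$. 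This is the structural reason the final formula (\ref{bilinear_exp}) is phrased in terms of $\hat\mu^\pm$ rather than $\mu^\pm$. It then remains only to collect the scalar constants: the $i^{m(\mu^-)}$ from (\ref{polariz_sum}) together with the branch-dependent constant from the Lemma should, for both parities, consolidate into the single exponent $\tfrac12 m(\hat\mu^-)$ in (\ref{bilinear_exp}), leaving the prefactor $(-1)^{\frac12 r(r+1)+s}2^{-(r-s)}$ and $\sgn(\mu)(-1)^{\pi(\mu^-)}$ untouched.

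I expect this phase/normalization bookkeeping in the odd--odd branch to be the main obstacle. There the Lemma supplies the constant $2i$, and one must verify that $2i\cdot i^{m(\mu^-)}$ collapses to the same uniform weight $(-1)^{\frac12 m(\hat\mu^-)}$ that $i^{m(\mu^-)}$ already produces in the even--even branch; keeping straight the $2i$, the powers of $i$, the reordering signs of the neutral fermions (already absorbed into $\sgn(\mu)$), and the zero-mode normalization recorded in (\ref{phi_pm_0vac_r}) is precisely where the delicacy lies, and it is the point at which a naive count threatens an extra factor of $2$ on the odd terms that must be shown to be consistent with the stated conventions for $\kappa$ on supplemented partitions. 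By contrast, once the Factorization Lemma is invoked the manipulations with the group elements $\hat h^\pm$ and the BKP flows $\hat\gamma^{B\pm}$ are routine, so I would structure the write-up to isolate this constant-tracking as a short, self-contained computation carried out case by case on the parity of $m(\mu^\pm)$.
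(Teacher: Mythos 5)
Your proposal is the paper's own proof, spelled out: the published argument consists precisely of substituting the polarization expansion (\ref{polariz_sum}) into (\ref{pi_alpha_beta}) and applying (\ref{g_h_def}), (\ref{gamma_factoriz}), (\ref{kappa_alpha}), (\ref{alpha_pm}) and Lemma \ref{factorization_lemma}; your decomposition into $U^{\pm}$, the parity matching via (\ref{m_pm_sum}), and the case split over the two branches of the factorization lemma are exactly the steps the paper leaves implicit.

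However, the one point you defer --- checking that $2i\cdot i^{m(\mu^-)}$ collapses to the uniform weight $(-1)^{\frac12 m(\hat\mu^-)}$ --- is not a consistency check that can be passed: the computation gives $2i\cdot i^{m(\mu^-)} = 2\,(-1)^{\frac12 m(\hat\mu^-)}$, so with the paper's stated conventions ($\kappa$ given by (\ref{kappa_alpha}) evaluated on the states (\ref{alpha_pm}) with $\phi^\pm_0$ appended) every odd--odd polarization contributes \emph{twice} the summand written in (\ref{bilinear_exp}), and nothing is available to absorb the $2$. A minimal example confirms the factor is real. Take $\lambda=(2)=(1|0)$ and $h=\Ib$, so $r=s=1$ and the unique polarization is $\mu^+=\mu^-=(1)$, both of odd cardinality, with $\sgn(\mu)=1$, $\pi(\mu^-)=1$, $m(\hat\mu^-)=2$; then (\ref{bilinear_exp}) asserts $\pi_{(1|0)}(\tb')=\bigl(\kappa_{(1,0)}\bigr)^2$. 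But from $[J^{B+}_1,\phi^+_1\phi^+_0]=(\phi^+_0)^2+\phi^+_1\phi^+_{-1}$ and (\ref{square_phi_0}) one gets $\kappa_{(1,0)}(\Ib)(\tb_B)=\langle 0|\hat\gamma^{B+}(\tb_B)\phi^+_1\phi^+_0|0\rangle=\tfrac12 t_1$, so the right-hand side is $t_1^2/4$, whereas $s_{(2)}(\tb')=t_1^2/2$: the discrepancy is exactly your factor of $2$.

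So, completed honestly, your argument proves (\ref{bilinear_exp}) with the summand multiplied by $2^{\frac12(m(\hat\mu^+)+m(\hat\mu^-))-r}$, which equals $1$ on even--even and $2$ on odd--odd polarizations; equivalently, one may keep the printed formula but must then build an extra $\sqrt2$ into $\kappa_{\hat\mu^\pm}$ for each appended zero mode. The corrected weighting is the one consistent with the paper's own Example \ref{schur_Q_schur_bilin_sum}: substituting $\kappa_{\hat\mu^\pm}=2^{-\frac12 m(\hat\mu^\pm)}\tilde Q_{\hat\mu^\pm}(\tfrac12\tb_B)$ into the corrected expansion reproduces the uniform prefactor $2^{-(2r-s)}$ of (\ref{s_lambda_Q_bilinear_bis}), whereas (\ref{bilinear_exp}) as printed would instead place an extra $\tfrac12$ on every odd--odd term there. (A second, related caution you partly anticipated: when $0\in\mu^\pm$, which occurs whenever $\alpha_r=0$, the appended $\phi^\pm_0$ meets $(\phi^\pm_0)^2=\tfrac12$, so $|\hat\mu^\pm)$ is not the state of a strict partition at all, and the convention for $\kappa_{\hat\mu^\pm}$ must be fixed to account for this as well.) In short: your route is the paper's route, and the obstacle you isolated is a genuine defect of the stated constants rather than a gap that more careful bookkeeping on your side could close.
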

\begin{proof}
Substituting (\ref{polariz_sum}) into (\ref{pi_alpha_beta}), using (\ref{kappa_alpha}) and (\ref{alpha_pm}) and
applying the factorizations (\ref{g_h_def}), (\ref{gamma_factoriz}) and Lemma \ref{factorization_lemma}
 gives the bilinear expansion \ref{bilinear_exp}.
\end{proof}

%%%%%%%%%%%%%%%%%%%  Subsection  4. Special cases%%%%%%%%%%%
\section{Examples}
\label{special_cases}
The first two examples, \ref{det_pfaff_bilin_sum} and \ref{schur_Q_schur_bilin_sum},  show how the results 
of \cite{BHH} and \cite{HO} are recovered by choosing special values for the parameters in Theorem \ref{KP_BKP_lattice_bilinear} . 
Examples (\ref{symm_polynom_bilin_sum})  and (\ref{r_p_polynom_bilin_sum}) give families of lattices of  
 polynomial  $\tau$-functions of KP and BKP type that generalize the 
Schur functions and Schur $Q$-functions \cite{BL, HL, KvdL1, KvdL2, KvdLRoz, Iv}. 
These include, as special cases: {\em factorial}, {\em shifted}  and {\em interpolation Schur functions}, \cite{BL, OkOl, Ol} 
{\em interpolation} Schur $Q$-functions \cite{Iv}, and multivariate Laguerre polynomials \cite{La, Ol}.

%%%%%%%%%%  Examples 4.1.  Determinants of skew matrices as sums over products of Pfaffians  %%%%%%%%%%%

\begin{example}[\bf Determinants of skew matrices as bilinear sums over Pfaffians]
\label{det_pfaff_bilin_sum}

The bilinear identity in \cite{BHH} follows as a particular case of Theorem \ref{KP_BKP_lattice_bilinear},
 Setting \hbox{${\bf t} = {\bf 0} = (0, 0, \dots)$} in (\ref{bilinear_exp}),
  the $\tau$-function $\pi_{\alpha |\beta)}({\bf 0})(g)$ becomes the  $(\alpha|\beta)$ Pl\"ucker coordinate
of the element $g^T(\HH)$ of the Grassmannian $\Gr_{\HH}(\HH+\HH^*)$ of subspaces of $\HH+\HH^*$ commensurate
with the subspace $\HH$ which is mapped, under the Pl\"ucker map, to the vacuum $|0\rangle$.

In the big cell of the Grassmannian, which consists of those elements that are the graph of a map $w: \HH\ra \HH^*$,
it is the determinant of minors of the  affine coordinate matrix $A$ which, for $g$ of the form
(\ref{g_h_def}) is skew symmetric. And $\kappa_{\hat{\mu}^\pm} (h)({\bf 0}_B)$ are the {\em Cartan coordinates} \cite{BHH} of
the element $h^T(\HH)$ of the maximal  isotropic Grassmannian $\Gr^0_\HH(\HH+\HH^*, Q)$, which is the
Pfaffian of the principal minors of the affine coordinate matrix $A$ corresponding to the strict partitions $\mu^\pm$.
This reproduces the expansions of the determinants of minors of a skew matrix as sums over products of Pfaffians of their
principal minors given by Corollary 6.4 of \cite{BHH}. 
\be
 \det(A_{(I | J)}) = {(-1)^{\frac{1}{2}r(r+1)+s}\over 2^{r-s} } {\hskip -10 pt}\sum_{(K,L) \in \PP(I, \beta)}
   {\hskip -10 pt} (-1)^{\pi +l/2} \sgn(K,L) \Pf(A_{(K | K)})\Pf( A_{(L | L)}), 
 \label{single_sum_pfaff_CB_ident}
 \ee
 where $A$ is a skew symmetric $N \times N$ matrix, $I$ and $J$ are the rows and columns
 \be
 I=(I_1, \dots, I_r) =I(\alpha) , \quad J=(J_1, \dots, I_r)=I(\beta)
 \label{I_J_def}
 \ee
of the $r \times r$ minor $A_{(I | J)}$, 
\be
K=(K_1, \dots, K_k)= I(\mu^+), \quad L=(L_1, \dots, L_l)= I(\mu^-)
\label{K_L_def}
\ee
are the rows and columns of the pair of principal minors $A_{(K| K)}$ and $A_{(L | L)}$, and
\be
\pi:=\#(I\cap L),\quad k=m(\mu^+)=\#(K), \quad  l=m(\mu^-)=\#(L), 
\label{k_l_def_sum}
\ee
The ordered subsets $I,J, K, L \ss (1, \dots N)$ are polarizations related by
\be
I\cap J = K\cap L:=S, \quad I\cup J = K\cup L.
\label{I_cap_cupJ}
\ee
$s$  is the cardinality of $S$, and both $k$ and $l$ must have even parity
 in order that the Pfaffians in the sum not vanish.

\end{example}

%%%%%%%%%%%%  Example 4.2.  Schur functions as sums over products of Schur $Q$-functions] %%%%%%%%%%%

 \begin{example}[\bf Schur functions as bilinear sums over Schur $Q$-functions]
\label{schur_Q_schur_bilin_sum}

If we set $h$ equal to the identity element $\Ib\in \SO(\HH+\HH^*, Q)$, the  $\tau$-function $\pi_{\alpha|\beta)}({\bf t}')(\Ib)$
becomes the Schur function  \cite{DJKM1, DJKM2, JM, HB}
\be
s_{(\alpha|\beta)}({\bf t}) =  \pi_{\alpha|\beta)}(\Ib)({\bf t}) =\langle 0 | \hat{\gamma}_+({\bf t})| \lambda \rangle, 
\label{schur_fermionic}
\ee
which, by the Giambelli identity (which is a particular case of (\ref{pi_alpha_beta_det})), is the determinant 
of the matrix formed from Schur functions corresponding to hook partitions, whose restriction 
to ${\bf t}'$ equals the skew matrix formed from the Schur $Q$-functions $Q_{jk}({\bf t}_B)$ with  Frobenius rank $2$.
 Then $\kappa_{\hat{\mu}^\pm} (\Ib_{\HH^\pm_0})({\bf t}_B))$ become the  (scaled) Schur $Q$-functions 
 \bea
\QQ_{\hat{\mu}^\pm}({\bf t}_B)&\&= \kappa_{\hat{\mu}^\pm} (\Ib_{\HH^\pm_0})({\bf t}_B))
= \langle 0 |  \hat{\gamma}^{B\pm} ({\bf t}_B) | \hat{\mu}^\pm) \cr
&\&= 2^{-\frac{1}{2}m(\hat{\mu}^\pm)} \tilde{Q}_{\hat{\mu}^\pm}(\tfrac{1}{2} \tb_B),
 \eea
 where, restricting the independent variables to $\tb_B = [\xb]_B$ as in (\ref{t_norm_power_sum}), we recover the usual Q Schur polynomials
 \be
 Q(\xb) = \tilde{Q}(\tb_B),
 \ee
 which, by (\ref{kappa_alpha_pfaff}), are the Pfaffians of the principal minors of the same skew symmetric matrix 
 $\QQ_{\mu^\pm_j, \mu^\pm_k}({\bf t}_B)$. 
 
 Eq.~(\ref{bilinear_exp}) therefore reduces to the  result obtained in \cite{HO}, 
 expressing Schur functions as sums over products of Schur $Q$-functions,
 \be
s_{(\alpha|\beta)}(\tb')  = {(-1)^{\tfrac{1}{2}r(r+1)+s} \over 2^{2r-s}} \sum_{\mu\in \PP(\alpha, \beta)} \sgn(\mu)
 (-1)^{\pi(\mu^-)+\tfrac{1}{2}m(\hat{\mu}^-)}
 \tilde{Q}_{\hat{\mu}^+}(\tfrac{1}{2}\tb_B) \tilde{Q}_{\hat{\mu}^-}(\tfrac{1}{2}\tb_B).
   \label{s_lambda_Q_bilinear_bis}
 \ee
 \end{example}
 
 %%%%%%%%%%%%  Example 4.3.  Symmetric polynomial lattices of KP vs. BKP $\tau$-functions %%%%%%%%%%%
 
 \begin{example}{\bf (Lattices of polynomial  KP vs. BKP $\tau$-functions)}.
\label{symm_polynom_bilin_sum}
A broad family of KP and BKP $\tau$-function lattices consisting of
more general (inhomogeneous) polynomials in the flow variables
is obtained if special restrictions are placed on the matrices $\tilde{A}$  relating them to those
appearing in (\ref{hat_g_exp_A}) and (\ref{hat_h_alpha_pm_exp_A}). 
Such  polynomial KP $\tau$-functions, their BKP analogs  and their fermionic representations were studied 
in \cite{Iv, HL, KvdL2, KvdLRoz, Roz}. They may be viewed as generalized Schur functions and Schur $Q$-functions,
which are expressible as finite linear combinations of ordinary Schur and Schur $Q$-functions.
We introduce the following fermionic operators
\be
\psi_i(\tilde{A})=\hat{g}(\tilde{A})\psi_i \left( \hat{g}(\tilde{A)}\right)^{-1},\quad \psi_i^*(\tilde{A})=\hat{g}(\tilde{A})\psi_i^\dag \left( \hat{g}(\tilde{A})\right)^{-1}
\label{psi_A}
\ee
where $\hat{g}(\tilde{A})$ is defined in (\ref{hat_g_exp_A}), and
\be
\phi^\pm_i(A)=\hat{h}^\pm(A)\phi_i^\pm (\hat{h}^\pm(A))^{-1},
\label{phi_pm_A}
\ee
where  $\hat{h}^\pm(A)$ is defined in (\ref{hat_h_alpha_pm_exp_A}).
\begin{lemma}
If the infinite skew matrix  $A$ is chosen to satisfy the antidiagonal triangular condition:
\be
A_{j, -k} = 0 \quad \text{if } j>k
\ee
and hence $\tilde{A}$ is has the upper  triangular form
\be
\tilde{A}_{jk} = 
\begin{cases} (-1)^k A_{j, -k} \ \text{ if } \ j\le k \cr
0  \ \text{ if } \ j > k, 
\end{cases}
\ee
then
\be
\hat{g}(\tilde{A}) = \hat{h}^+(A) \hat{h}^-(A) 
\ee
and
\be
\psi_j(\tilde{A})=\tfrac{1}{\sqrt{2}}(\phi_j^+(A)- i \phi_j^-(A)),\quad \psi_{-j}^*(\tilde{A})=\tfrac{(-1)^j}{\sqrt{2}}(\phi_j^+(A)+ i \phi_j^-(A))
\ee
\end{lemma}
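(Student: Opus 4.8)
The plan is to verify the three claimed identities in sequence, exploiting the fact that each is really a statement about how the adjoint action of the group elements $\hat{g}(\tilde{A})$ and $\hat{h}^\pm(A)$ transforms the fermionic generators, together with the elementary relations (\ref{neutral_charged}) between charged and neutral fermions. The antidiagonal triangular condition $A_{j,-k}=0$ for $j>k$ is the hypothesis that makes all three identities hold simultaneously, so I would first unpack exactly what it forces on $\tilde{A}$ via the substitution $\tilde{A}_{jk}=(-1)^k A_{j,-k}$ from (\ref{tilde_A_A_skew}).

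First I would establish the factorization $\hat{g}(\tilde{A})=\hat{h}^+(A)\hat{h}^-(A)$. The general identity (\ref{fermi_rep_g_h}) already gives $\hat{g}(\tilde{A})=\hat{h}^+(A)\hat{h}^-(A)$ whenever $\tilde{A}_{jk}=(-1)^k A_{j,-k}$ with $A$ a genuine skew symmetric matrix; the point here is that under the antidiagonal triangular condition the raw exponent $\sum_{j,k}(-1)^k A_{j,-k}\no{\psi_j\psi^\dag_k}$ is supported only on indices $j\le k$, which is precisely the upper triangular form asserted for $\tilde{A}$. So I would argue that the triangular condition is the special case of (\ref{fermi_rep_g_h}) in which the off-triangular part of $\tilde{A}$ vanishes, and that the skew symmetry of $A$ is what allows the charged exponent to be reorganized, using the mutual commutativity of $\gro(\tilde{\HH}^+,Q_+)$ and $\gro(\tilde{\HH}^-,Q_-)$, into the product of the two neutral exponentials. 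The factorization is then immediate from (\ref{g_h_def}) and (\ref{fermi_rep_g_h}).

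Next I would prove the two transformed-generator identities. The clean route is not to manipulate (\ref{psi_A}) and (\ref{phi_pm_A}) directly, but to observe that conjugation by a group element is an algebra automorphism, so it is linear in the generators and respects the fixed linear substitution (\ref{neutral_charged}). Concretely, from (\ref{neutral_charged}) one reads off $\psi_j=\tfrac{1}{\sqrt2}(\phi^+_j-i\phi^-_j)$ and $\psi^\dag_{-j}=\tfrac{(-1)^j}{\sqrt2}(\phi^+_j+i\phi^-_j)$, the same substitution already used in the proof of (\ref{polariz_sum}). Applying the automorphism $\mathrm{Ad}_{\hat{g}(\tilde{A})}=\mathrm{Ad}_{\hat{h}^+(A)}\mathrm{Ad}_{\hat{h}^-(A)}$ to both sides and invoking the definitions (\ref{psi_A}), (\ref{phi_pm_A}), the right-hand sides become $\tfrac{1}{\sqrt2}(\hat{h}^+\phi^+_j(\hat{h}^+)^{-1}-i\,\hat{h}^-\phi^-_j(\hat{h}^-)^{-1})$; the key simplification is that $\hat{h}^+(A)$ commutes with every $\phi^-$ and $\hat{h}^-(A)$ commutes with every $\phi^+$, because the two orthogonal algebras mutually commute by (\ref{neutral_ccr}). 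Hence $\mathrm{Ad}_{\hat{g}}\phi^+_j=\phi^+_j(A)$ and $\mathrm{Ad}_{\hat{g}}\phi^-_j=\phi^-_j(A)$, and the two displayed formulas for $\psi_j(\tilde{A})$ and $\psi^*_{-j}(\tilde{A})$ follow by linearity.

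The main obstacle is bookkeeping rather than conceptual: one must confirm that the antidiagonal triangular condition is genuinely what singles out $\hat{g}(\tilde{A})=\hat{h}^+(A)\hat{h}^-(A)$ as opposed to $\hat{g}$ differing by a vacuum-stabilizing factor, and that the sign conventions $(-1)^k$ versus $(-1)^j$ are tracked correctly through the substitution (\ref{tilde_A_A_skew}) and through the two occurrences of $j$ in $\psi^\dag_{-j}$. I would handle this by checking the exponents termwise: matching $\sum_{j\le k}(-1)^k A_{j,-k}\no{\psi_j\psi^\dag_k}$ against $\tfrac12\sum_{j,k}A_{jk}(\no{\phi^+_j\phi^+_k}+\no{\phi^-_j\phi^-_k})$ after expanding the neutral normal-ordered products via (\ref{neutral_charged}), where the skew symmetry $A_{jk}=-A_{kj}$ causes the cross terms $\no{\phi^+_j\phi^-_k}$ to cancel. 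Once the exponents agree, everything else is forced by the automorphism property and the mutual commutativity already recorded in the excerpt.
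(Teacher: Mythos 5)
Your proposal is correct and takes essentially the same route as the paper's (very terse) proof: both rest on the definitions (\ref{hat_g_exp_A}), (\ref{hat_h_alpha_pm_exp_A}), the substitution (\ref{neutral_charged}), the skew symmetry of $A$ forcing cancellation of the mixed $\no{\phi^+_j\phi^-_k}$ terms in the exponent, and the mutual anticommutativity of the two neutral families $\{\phi^+_i\}$, $\{\phi^-_i\}$, which simultaneously yields the factorization of the exponentials and the conjugation formulas. Your write-up simply makes explicit the termwise bookkeeping that the paper leaves implicit.
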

\begin{proof}
This follows from the definitions (\ref{hat_g_exp_A}) and \ref{hat_h_alpha_pm_exp_A}) of   $\hat{g}(\tilde{A})$ and $\hat{h}^\pm(A)$,
the relation (\ref{neutral_charged}) between charged and neutral fermionic operators, the skew symmetry of the matrix $A$,
which leads to the cancellation of the sum over mixed terms $\phi^+_j\phi^-_k$ in the exponent,
and the mutual anticommutatitvity  of the two types of neutral fermi operators $\{\phi^+_j\}_{j\in \Zb}$ and $\{\phi^-_j\}_{j\in \Zb}$
\end{proof}

To define a lattice of polynomial KP $\tau$-functions that generalize the Schur polynomials we note that,
because  $\tilde{A}$ is upper triangular, $\hat{g}(h(A))$  stabilizes the vacuum $|0\rangle$
\be
 \hat{g}(\tilde{A})|0\rangle =|0\rangle.
 \label{g_tilde_A_vac}
 \ee
It follows  from (\ref{pi_alpha_beta_det}) that
\be
\pi_{(\alpha|\beta)}(g(h(A)))({\bf t})  =
\det \left( \langle 0|\hat{\gamma}_+(\tb)\hat{g}(\tilde{A}) \psi_{\alpha_i}\psi^\dag_{-\beta_j-1}|0\rangle \right)_{i,j=1,\dots,r}
= \langle 0|\hat{\gamma}_+(\tb)\hat{g}(\tilde{A})|\lambda\rangle.
\label{pi_alpha_beta_def}
\ee
Since both $\hat{\gamma}_+(\tb)$ and $\hat{g}(\tilde{A})$ stabilize the vacuum, we may define
\bea
s_\lambda({\tb}|\tilde{A})&\& := \pi_{(\alpha|\beta)}(g(h(A)))({\bf t}) \cr
&\&= (-1)^{\sum_{j=1}^r\beta_j}(-1)^{\tfrac{1}{2}r(r-1)} \langle 0|\hat{\gamma}_+(\tb)\psi_{\alpha_1}(\tilde{A})\cdots \psi_{\alpha_r}(\tilde{A}) \psi^*_{-\beta_1-1}(\tilde{A})\cdots \psi^*_{-\beta_r-1}(\tilde{A})|0\rangle
\cr
&\&= (-1)^{\sum_{j=1}^r\beta_j}\det\left( \langle 0|\hat{\gamma}_+(\tb)\psi_{\alpha_i}(\tilde{A}) \psi^*_{-\beta_j-1}(\tilde{A})|0\rangle \right)_{i,j=1,\dots, r}, 
\label{gen_Schur_def}
\eea
and these can be interpreted as generalized Schur functions (cf.~eqs.~(\ref{lambda_frob_vac}), \ref{schur_fermionic})).

Note that,  as for ordinary Schur functions \cite{Mac}, we have, for any ${\bf p}=(p_1, p_2, \dots )$, 
 \be
 s_{\lambda}({\tb +{\bf p}}|\tilde{A})=\sum_{\rho\subseteq \lambda} 
  s_{\lambda/\rho}({\bf p}|\tilde{A}) s_{\rho}(\tb), 
  \label{s_lambda_A_p_shifted}
 \ee
 where the generalized skew Schur function is defined fermionically as
\be
s_{\lambda/\rho}({\tb}|\tilde{A}):= (-1)^{\sum_{j=1}^r\beta_j}(-1)^{\tfrac{1}{2}r(r-1)}
\langle \rho|\hat{\gamma}_+(\tb)\psi_{\alpha_1}(\tilde{A})\cdots 
\psi_{\alpha_r}(\tilde{A})\psi^*_{-\beta_1-1}(\tilde{A}) \cdots \psi^*_{-\beta_1-1}(\tilde{A})|0\rangle.
\label{general_schur_translate}
\ee
This follows from factorizing
 \be
 \hat{\gamma}_+({\bf t} + {\bf p}) =  \hat{\gamma}_+ ({\bf t})  \hat{\gamma}_+({\bf p} )
 \ee
 in (\ref{pi_alpha_beta_def}) and inserting the projection operator 
 \be
 \sum_{\rho} | \rho;0 \rangle \langle \rho;0 |
 \ee
onto the  $n=0$ fermionic charge sector between the factors.
It follows from (\ref{general_schur_translate}) that   
$s_{\lambda}({\tb}|\tilde{A})$ is, indeed, a polynomial in the variables ${\bf t}=(t_1, t_2, \dots)$
since setting ${\bf p} ={\bf 0}$ in (\ref{s_lambda_A_p_shifted})  gives
\be
 s_{\lambda}({\tb }|\tilde{A})=\sum_{\rho\subseteq \lambda}  s_{\lambda/\rho}({\bf 0}|\tilde{A}) s_{\rho}(\tb).
 \label{s_lambda_A_expan_s_rho}
 \ee

Next,  we define the (scaled) generalized Schur $Q$-functions to be
\bea
\QQ_{\hat{\mu}^\pm}({\tb_B}|A)&\&:= \kappa_{\alpha} (h(A))({\bf t}_B) =
\langle 0|\gamma^{B\pm}(\tb_B)\phi^\pm_{\hat{\mu}^\pm_1}(A)\cdots \phi^\pm_{\hat{\mu}^\pm_{m(\hat{\mu}^\pm)}}(A) |0\rangle
 \cr
&\&=\Pf \left( \langle 0|\gamma^{B^\pm}(\tb_B)\phi^\pm_{\hat{\mu}^\pm_i}(A^\pm)\phi^\pm_{\hat{\mu}^\pm_j}(A)|0\rangle \right)_{i,j=1,\dots, m(\hat{\mu}^\pm)} \cr
&\& =: 2^{-\tfrac{1}{2} m(\hat{\mu}^\pm)} \tilde{Q}_{\hat{\mu}^\pm}(\tfrac{1}{2}{\tb_B}|A)
\label{gen_Q_def}
\eea
(Recall that $\hat{\mu}^\pm$ denotes the {\em  supplemented partition}, 
of even cardinality, which is obtained from  $\mu^\pm$ by adding a $0$ part if it has odd cardinality.)
These too are polynomials in the BKP flow variables $\tb_B$ since, like (\ref{s_lambda_A_expan_s_rho}), we 
may express $\QQ_{\hat{\mu}^\pm}({\tb_B}|A)$ as a finite linear combination of the Schur $Q$-functions
$\QQ_{\hat{\nu}}({\tb_B})$.
\be
\QQ_{\hat{\mu}^\pm}(\tb_B | A)({\bf p}_B)) =
\sum_{\nu \subseteq \hat{\mu}^\pm}\, 
 \,\QQ_{\hat{\mu}^\pm/\nu}({\bf 0}|A)\,\QQ_{\nu}({\tb_B}),
\ee

 Restricting to this particular case of Theorem \ref{KP_BKP_bilinear} gives:
\begin{corollary} 
\label{bilinear_general_schur}
The generalized Schur function $s_{(\alpha|\beta)}$  (\ref{gen_Schur_def}), evaluated  at $\tb=\tb'$, 
may be  expressed as the following sum  over products $ \tilde{Q}_{\hat{\mu}^+} \tilde{Q}_{\hat{\mu}^-}$ of pairs of generalized 
$Q$ Schur functions (\ref{gen_Q_def}):
 \be
s_{(\alpha|\beta)}(\tb'|A)  = {(-1)^{\tfrac{1}{2}r(r+1)+s} \over 2^{2r-s}} \sum_{\mu\in \PP(\alpha, \beta)} \sgn(\mu)
 (-1)^{\pi(\mu^-)+\tfrac{1}{2}m(\hat{\mu}^-)}
 \tilde{Q}_{\hat{\mu}^+}(\tfrac{1}{2}\tb_B|A) \tilde{Q}_{\hat{\mu}^-}(\tfrac{1}{2}\tb_B|A).
   \label{s_alpha_beta_Q_bilinear}
 \ee
 \end{corollary}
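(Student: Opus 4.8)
The plan is to derive the statement as the specialization of Theorem~\ref{KP_BKP_lattice_bilinear} to the polynomial family isolated in Example~\ref{symm_polynom_bilin_sum}, i.e.\ to the skew matrix $A$ in the antidiagonal triangular form, for which $\tilde{A}$ is upper triangular and $\hat{g}(\tilde{A})=\hat{h}^+(A)\hat{h}^-(A)$. This is exactly the restriction $\hat{g}\mapsto \hat{g}(h)$ for which Theorem~\ref{KP_BKP_lattice_bilinear} already holds, so all that remains is to recognize each factor appearing in (\ref{bilinear_exp}) as one of the generalized Schur and Schur $Q$-functions defined in (\ref{gen_Schur_def}) and (\ref{gen_Q_def}).

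First I would identify the left-hand side: since $\hat{g}(\tilde{A})$ stabilizes the vacuum by (\ref{g_tilde_A_vac}), formula (\ref{pi_alpha_beta_def}) evaluated at $\tb=\tb'$ gives $\pi_{(\alpha|\beta)}(g(h(A)))(\tb')=s_{(\alpha|\beta)}(\tb'|A)$, which is precisely the definition (\ref{gen_Schur_def}). Next, the BKP lattice elements occurring in (\ref{bilinear_exp}) are, by (\ref{gen_Q_def}), $\kappa_{\hat{\mu}^\pm}(h(A))(\tb_B)=2^{-\frac{1}{2}m(\hat{\mu}^\pm)}\tilde{Q}_{\hat{\mu}^\pm}(\tfrac{1}{2}\tb_B|A)$. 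Substituting both identifications into (\ref{bilinear_exp}) and carrying the sign $\sgn(\mu)(-1)^{\pi(\mu^-)+\frac{1}{2}m(\hat{\mu}^-)}$ over unchanged produces the claimed sum over $\PP(\alpha,\beta)$.

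The only genuine computation is the accounting of powers of $2$. Each use of (\ref{gen_Q_def}) contributes a factor $2^{-\frac{1}{2}(m(\hat{\mu}^+)+m(\hat{\mu}^-))}$, which must combine with the prefactor $2^{-(r-s)}$ of (\ref{bilinear_exp}) to yield the asserted $2^{-(2r-s)}$. By (\ref{m_pm_sum}) we have $m(\mu^+)+m(\mu^-)=2r$, so for the polarizations with $m(\mu^\pm)$ both even, where $\hat{\mu}^\pm=\mu^\pm$, the reconciliation is immediate. The case requiring care is that of polarizations with $m(\mu^\pm)$ both odd: there each supplemented partition acquires a zero part, and one must verify that the additional $\phi^\pm_0$ insertions, the factor produced by the odd-odd branch of the Factorization Lemma~\ref{factorization_lemma}, and the normalization convention of (\ref{gen_Q_def}) for supplemented partitions together leave the net power of $2$ equal to $2^{-(2r-s)}$, while the accumulated powers of $i$ from (\ref{polariz_sum}) collapse to the sign $(-1)^{\frac{1}{2}m(\hat{\mu}^-)}$. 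I expect this uniform treatment of the two parity classes of polarizations to be the main obstacle; everything else is definitional.

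Alternatively, one can avoid invoking the theorem as a black box and reprove the corollary directly: insert the polarization expansion (\ref{polariz_sum}) of $|\lambda\rangle$ into (\ref{pi_alpha_beta_def}), pass $\hat{\gamma}_+(\tb')\hat{g}(\tilde{A})$ through the neutral fermions using the dressed operators $\psi_i(\tilde{A})$ and $\phi^\pm_i(A)$ of (\ref{psi_A})--(\ref{phi_pm_A}), factor $\hat{\gamma}_+(\tb')=\hat{\gamma}^{B+}(\tb_B)\hat{\gamma}^{B-}(\tb_B)$ by (\ref{gamma_factoriz}), and apply Lemma~\ref{factorization_lemma}. Each resulting VEV is then, by construction, one of the $\tilde{Q}_{\hat{\mu}^\pm}(\tfrac{1}{2}\tb_B|A)$ of (\ref{gen_Q_def}), and the overall constant is forced by the very computation already underlying (\ref{bilinear_exp}).
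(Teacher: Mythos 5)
Your proposal takes essentially the same route as the paper, which in fact offers no separate proof: the corollary is introduced only by the phrase ``Restricting to this particular case of Theorem \ref{KP_BKP_lattice_bilinear} gives,'' i.e.\ precisely your definitional identifications $s_{(\alpha|\beta)}(\tb'|A)=\pi_{(\alpha|\beta)}(g(h(A)))(\tb')$ from (\ref{gen_Schur_def}) and $\kappa_{\hat{\mu}^\pm}(h(A))(\tb_B)=2^{-\frac{1}{2}m(\hat{\mu}^\pm)}\tilde{Q}_{\hat{\mu}^\pm}(\tfrac{1}{2}\tb_B|A)$ from (\ref{gen_Q_def}). However, the issue you flag as ``the main obstacle'' is genuine, and your second, direct derivation is what actually closes it, so it should be promoted from an alternative to the proof itself. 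Literal substitution of (\ref{gen_Q_def}) into (\ref{bilinear_exp}) as printed yields $2^{-\frac{1}{2}(m(\hat{\mu}^+)+m(\hat{\mu}^-))}=2^{-r}$, hence the constant $2^{-(2r-s)}$, only on even--even polarizations; on odd--odd ones it yields $2^{-(r+1)}$, i.e.\ $2^{-(2r-s+1)}$, so the uniform constant of (\ref{s_alpha_beta_Q_bilinear}) cannot be obtained from (\ref{bilinear_exp}) by substitution alone. The reconciliation occurs inside the factorization argument: inserting (\ref{polariz_sum}) and applying Lemma \ref{factorization_lemma}, the odd--odd branch contributes the extra factor $2i$, and $i^{m(\mu^-)}\cdot 2i = 2(-1)^{\frac{1}{2}m(\hat{\mu}^-)}$, where that factor of $2$ exactly cancels the extra $2^{-\frac{1}{2}}\cdot 2^{-\frac{1}{2}}$ coming from the two appended zero parts in the normalizations $2^{-\frac{1}{2}m(\hat{\mu}^\pm)}$; both parity classes then produce the same constant $2^{-(2r-s)}$ and sign $(-1)^{\frac{1}{2}m(\hat{\mu}^-)}$, as you anticipated. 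A quick check with $\lambda=(1)$, $h=\Ib$ confirms this: the corollary correctly gives $s_{(1)}(\tb')=t_1$, whereas a literal reading of (\ref{bilinear_exp}) (using $\kappa_{(1,0)}=\tfrac{1}{2}t_1$, $\kappa_{(0,0)}=\tfrac{1}{2}$, $\kappa_{\emptyset}=1$) gives $\tfrac{3}{4}t_1$. In short: your approach matches the paper's, your worry about the odd-parity bookkeeping is warranted rather than merely technical, and your fallback computation is the argument that makes the corollary's constant come out right.
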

 \begin{remark}
To interpret the results in terms of symmetric polynomials in a set of auxiliary variables,
 and consistently be able to choose the specialization $\tb = \tb'$ in (\ref{s_alpha_beta_Q_bilinear}),
we must reinterpret them as supersymmetric power sums,  as in (\ref{t_prime_y_sum}).
 \end{remark}
 
A particularly simple case  consists of the double $D(\alpha)$ of a strict partition $\alpha$, 
for which the Frobenius indices $(\alpha |\beta)$ are related by $\alpha= I(\beta)$.
Since the only polarization in this case is itself, the sum consists of just one term, and we obtain
(as for ordinary Schur functions)
 \be
 s_{D(\alpha)}(\tb'|A)= 2^{-r} \left( \tilde{Q}_{{\hat\alpha}}(\tfrac{1}{2}\tb_B|A^\pm) \right)^2.
 \ee
\end{example}

 %%%%%%%%%%%%  Example 4.4.  Symmetric polynomial lattices of KP vs. BKP $\tau$-functions %%%%%%%%%%%
 
 \begin{example}{\bf (Lattices of polynomial $\tau$-functions parametrized by ${\bf r}$ and ${\bf p}$)}.
\label{r_p_polynom_bilin_sum}
An interesting  subclass of generalized Schur functions and Schur $Q$-functions
is obtained if we choose the skew matrix $A$ to be parametrized in terms    
two sets of  infinite parameters ${\bf r}:=\{r_j\}_{j \in \Zb}$ and  ${\bf p}=(p_1,p_2,p_3,\dots)$,
where we assume the relations
\be
r_j= r_{1-j}, \quad j\in \Nb.
\ee

Let 
\bea
\hat{A}^\rb_j &\&:=\sum_{k\in\Zb} \psi_k \psi^\dag_{k+j} r_{k+1}\cdots r_{k+j}, \quad j=1,2,3,\dots
\label{hat_A_r_j} \\
\hat{A}^{\rb\pm}_j&\&:=\sum_{k\in\Zb}(-1)^{k+1} \phi^\pm_k \phi^\pm_{-k-j} r_{k+1}\cdots r_{k+j},\quad j=1,3,5,\dots
\label{hat_A_j^pm} \\
\hat{A}^\rb(\pb) &\&:=\sum_{j>0}p_j A^\rb_j =: \sum_{j, k  \in \Zb} \tilde{A}^\rb(\pb)_{jk} \no{\psi_j \psi^\dag_k}, \\
 \hat{A}^{\rb \pm}(\pb_B)&\&:=\sum_{j>0,\,{\rm odd}}p_j\hat{A}^{\rb \pm}_j =: \tfrac{1}{2}\sum_{j,k \in \Zb}  A^{\rb}(\pb_B)_{jk} \no{\phi^\pm_j \phi^\pm_{k}} 
\label{hat_A_r_p}
\eea
where
\bea
 \tilde{A}^\rb(\pb)_{jk}&\&= p_{k-j} r_{j+1} \cdots r_k, \quad k>j 
 \label{A_r_p_jk},
   \\
 &\&\cr
A^{\rb}(\pb_B)_{jk} &\&=\begin{cases}  (-1)^k \, p_{-j-k}\, r_{j+1} \cdots r_{-k} , \quad j+k < 0 
  \text{ and odd}\cr
 \qquad  0 \quad \text{ if } j+k \ \text{ is even or } j+k \ge 0.
  \end{cases}
  \cr
  &\&
 \label{A_r_p_B_jk}
\eea
and define the group elements
\bea
\hat{g}(\tilde{A}^\rb(\pb)) := e^{\hat{A}^\rb({\bf p}) }, \quad \hat{h}^\pm(A^{\rb}(\pb_B)):= e^{\hat{A}^{\rb\pm }(\pb_B) }.
\label{g_Ar_h_Ar_def}
\eea
 We then have:
 \begin{proposition}
\be
s_\lambda(\tb | \tilde{A}^\rb(\pb))  = \pi_{(\alpha|\beta)}(g(\tilde{A}^\rb(\pb)))({\bf t}) =
\,\sum_{\rho \subseteq \lambda}
\,r_{\lambda/\rho}\,s_{\lambda/\rho}({\bf p})\,s_{\rho}({\bf t}),
\label{s_lambda_r_p_expansion}
\ee
where $s_{\lambda/\rho}({\bf p})$ is the skew Schur function corresponding to the skew
partition $\lambda/\rho$, and
\be
 r_{\lambda/\rho} :=\prod_{i=1}^{\ell(\lambda)} \prod_{j=\rho_i+1}^{\lambda_i}r_{j-i} ={ r_\lambda \over r_\rho}
 \label{skew_content_prof_r}
\ee 
is the content product over nodes of the skew  Young diagram of $\lambda/\rho$, and
\be
\QQ_{\hat{\mu}^\pm}(\tb_B | A^{\rb}({\bf p}_B)) = \kappa_{\hat{\mu}^\pm}(g(A^{\rb} (\pb_B)))({\bf t}_B) =
\sum_{\nu \subseteq \hat{\mu}^\pm}\, 
r^B_{\mu^\pm /\nu} \,\QQ_{\hat{\mu}^\pm/\nu}({\bf p}_B)\,\QQ_{\nu}({\tb_B}),
\label{Q_mu_r_p_expansion}
\ee
 where 
\be
\QQ_{\hat{\mu}^\pm/\nu}({\bf p}_B) = (\nu^\pm|\gamma^{B\pm}(\pb_B) | \hat{\mu}^\pm)
\ee
  is the (scaled) skew Schur Q-function 
 corresponding to the skew partition $\mu^\pm/\nu$ 
  and
\be
 r^B_{\mu^\pm/\nu}   :=\prod_{i=1}^{m(\mu^\pm)} \prod_{j=\nu_i+1}^{\mu^\pm_i} r_j ={ r^B_{\mu^\pm} \over r^B_\nu}
 \label{skew_strict_content_prod_r}
\ee 
is the content product over nodes of the {\em shifted} skew Young diagram  \cite{Iv} of $\mu^\pm/\nu$.

 \end{proposition}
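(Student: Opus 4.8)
\emph{Overview.} The plan is to exhibit both $\hat{g}(\tilde{A}^\rb(\pb))$ and $\hat{h}^\pm(A^\rb(\pb_B))$ as the ordinary KP, resp.\ BKP, flow operators \emph{dressed} by a diagonal (Cartan) operator whose eigenvalue on each Fock basis vector is precisely the content product. Since the content factors then pull out of the (skew) Schur and (skew) $Q$ matrix elements, the two expansions reduce to the skew‑translation formulas already established in Example~\ref{symm_polynom_bilin_sum}. Concretely, by (\ref{s_lambda_A_expan_s_rho}) and (\ref{general_schur_translate}) the KP identity is equivalent to $\langle\rho|\hat{g}(\tilde{A}^\rb(\pb))|\lambda\rangle = r_{\lambda/\rho}\,s_{\lambda/\rho}(\pb)$, and by the corresponding neutral skew expansion together with (\ref{gen_Q_def}) the BKP identity is equivalent to $(\nu^\pm|\hat{h}^\pm(A^\rb(\pb_B))|\hat{\mu}^\pm) = r^B_{\mu^\pm/\nu}\,\QQ_{\hat{\mu}^\pm/\nu}(\pb_B)$.

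\emph{KP case.} Since (\ref{A_r_p_jk}) shows $\tilde{A}^\rb(\pb)$ is strictly upper triangular, $\hat{g}(\tilde{A}^\rb(\pb))$ fixes $|0\rangle$ by (\ref{g_tilde_A_vac}). I would introduce the diagonal operator $D^\rb$ defined by $D^\rb|0\rangle=|0\rangle$ and $D^\rb\psi_k(D^\rb)^{-1}=d_k\psi_k$, with $d_k/d_{k-1}=r_k$ so that $d_k/d_{k+j}=(r_{k+1}\cdots r_{k+j})^{-1}$. Comparing (\ref{hat_A_r_j}) with (\ref{J_j_def}) then gives the conjugation identity $D^\rb\hat{A}^\rb_j(D^\rb)^{-1}=J_j$, hence $\hat{g}(\tilde{A}^\rb(\pb))=(D^\rb)^{-1}\hat{\gamma}_+(\pb)D^\rb$. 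The only genuinely combinatorial input is the eigenvalue identity $D^\rb|\lambda\rangle=r_\lambda|\lambda\rangle$: writing $|\lambda\rangle$ in Frobenius form (\ref{lambda_frob_vac}), $D^\rb$ scales it by $\prod_{j=1}^r d_{\alpha_j}/d_{-\beta_j-1}$, and one checks that this telescoping product over arms and legs equals the content product $\prod_{\square\in\lambda}r_{c(\square)}=r_\lambda$. Granting this, $\langle\rho|\hat{g}(\tilde{A}^\rb(\pb))|\lambda\rangle=r_\rho^{-1}\langle\rho|\hat{\gamma}_+(\pb)|\lambda\rangle\,r_\lambda=(r_\lambda/r_\rho)\,s_{\lambda/\rho}(\pb)$, which is the desired identity by (\ref{skew_content_prof_r}).

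\emph{BKP case.} This runs in parallel using neutral fermions. I would set $D^{\rb\pm}$ by $D^{\rb\pm}|0\rangle=|0\rangle$ and $D^{\rb\pm}\phi^\pm_k(D^{\rb\pm})^{-1}=(d_k/d_0)\phi^\pm_k$. Here the hypothesis $r_j=r_{1-j}$ is indispensable: it forces $d_m d_{-m}$ to be independent of $m$, so that $(d_k/d_0)(d_{-k}/d_0)=1$, which is exactly the condition for $D^{\rb\pm}$ to preserve the neutral anticommutator (\ref{neutral_ccr}) and hence to be a genuine element of $\SO(\tilde{\HH}^\pm,Q_\pm)$. Conjugating (\ref{hat_A_j^pm}) against (\ref{J_j_B_def}) yields $\hat{A}^{\rb\pm}_j=2\,(D^{\rb\pm})^{-1}J^{B\pm}_j\,D^{\rb\pm}$, so that $\hat{h}^\pm(A^\rb(\pb_B))$ is again the ordinary neutral flow conjugated by $D^{\rb\pm}$. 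The accompanying eigenvalue identity is $D^{\rb\pm}|\hat{\mu}^\pm)=r^B_{\mu^\pm}|\hat{\mu}^\pm)$, where the added zero part of a supplemented partition contributes the trivial factor $d_0/d_0=1$; inserting a resolution of the identity over the neutral states $\{|\nu^\pm)\}$ and extracting the eigenvalues $r^B_{\mu^\pm}$ and $r^B_\nu$ then gives $(\nu^\pm|\hat{h}^\pm(A^\rb(\pb_B))|\hat{\mu}^\pm)$ equal to $r^B_{\mu^\pm/\nu}$ times a skew $Q$‑function in the BKP times, matching (\ref{Q_mu_r_p_expansion}) via (\ref{skew_strict_content_prod_r}) once the normalization factors below are accounted for.

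\emph{Main obstacle.} Two points carry the weight. First, the content‑product identities — $\prod_j d_{\alpha_j}/d_{-\beta_j-1}=r_\lambda$ in Frobenius coordinates and its shifted analogue $\prod_i(d_{\hat{\mu}_i}/d_0)=r^B_{\mu^\pm}$ — must be established cleanly; both are elementary telescopings once contents are matched to particle positions, but they are the crux of why the dressing reproduces exactly (\ref{skew_content_prof_r}) and (\ref{skew_strict_content_prod_r}). Second, and most delicate, is the scalar bookkeeping in the neutral case: the factor $2$ relating $\hat{A}^{\rb\pm}_j$ to $J^{B\pm}_j$, the normalization $2^{-\frac12 m(\hat{\mu}^\pm)}$ fixing the scaled $Q$‑functions in (\ref{gen_Q_def}), and the supplementation $\mu^\pm\mapsto\hat{\mu}^\pm$ must all be reconciled so that each summand carries precisely the convention of (\ref{gen_Q_def}); this also forces one to pin down the correct dual pairing on the neutral Fock space used in the resolution of the identity. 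Away from these normalizations the BKP argument is a verbatim transcription of the KP one.
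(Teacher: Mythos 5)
Your proposal is correct and follows essentially the same route as the paper's proof: your diagonal operator $D^{\rb}$ is exactly the paper's dressing operator $\hat{C}^{\rb}$ (with $d_k=e^{-T^{\rb}_k}$, so $d_k/d_{k-1}=r_k$), both arguments conjugate the generators $\hat{A}^{\rb}_j$, $\hat{A}^{\rb\pm}_j$ into the standard current components and then insert a resolution of the identity over the charge-zero (resp.\ neutral) Fock basis, the only difference being that where the paper cites the dressed matrix elements $\langle \rho|\hat{\gamma}^{\rb}_+(\pb)|\lambda\rangle=r_{\lambda/\rho}\,s_{\lambda/\rho}(\pb)$ and its BKP analogue from \cite{Or1,Or2}, you derive them from the equivalent eigenvalue identities $D^{\rb}|\lambda\rangle=r_\lambda|\lambda\rangle$ and $D^{\rb\pm}|\hat{\mu}^\pm)=r^B_{\mu^\pm}|\hat{\mu}^\pm)$. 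As a side remark, your relation $\hat{A}^{\rb\pm}_j=2\,(D^{\rb\pm})^{-1}J^{B\pm}_j D^{\rb\pm}$ (factor of $2$ included, and your conjugation direction) is the one actually forced by the definitions of $\hat{A}^{\rb\pm}_j$ and $J^{B\pm}_j$, so the ``scalar bookkeeping'' you flag as delicate is a genuine factor-of-$2$ inconsistency in the paper's own proof, which asserts $\hat{A}^{\rb\pm}_j=\hat{C}^{\rb\pm}J^{B\pm}_j(\hat{C}^{\rb\pm})^{-1}$ without it.
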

\begin{proof}This follows along the lines of  \cite{Or1, OS}.
Define the operators
\be
\hat{C}^\rb:=\exp \left( \sum_{j< 0} T^\rb_{j}\psi^\dag_j\psi_j -\sum_{i\ge 0} T^\rb _j\psi_j\psi^\dag_j \right) 
\label{C_r}
\ee
where
\be
r_j=e^{T^\rb_{j-1}-T^\rb_{j}}, \quad T^\rb_0:=0.
\label{r_T}
\ee
Imposing the restriction
\be
T^\rb_j=-T^\rb_{-j},
\label{B-reduction}
\ee
 implies
\be
r_j=r_{1-j}.
\label{B-reduction-r}
\ee
We also define 
\be
\hat{C}^{\rb\pm} := \exp \sum_{j\ge 0} (-1)^{j+1}T^\rb_j\phi^\pm_j\phi^\pm_{-j} .
\label{C_r_pm}
\ee
It then follows from  (\ref{neutral_charged}) that
\be
\label{C_C_pm}
\hat{C}^\rb=\hat{C}^{\rb+}\hat{C}^{\rb-}=\hat{C}^{\rb-}\hat{C}^{\rb+}.
\ee

From (\ref{psi_i_anticomm}) and (\ref{neutral_ccr}) we get
\be
\hat{C}^\rb\psi_j (\hat{C}^\rb)^{-1}= e^{-T^\rb_j}\psi_j,\quad \hat{C}^\rb \psi^\dag_j (\hat{C}^\rb)^{-1}= e^{T^\rb_j}\psi^\dag_j, \quad 
\hat{C}^{\rb\pm} \phi^\pm_j (\hat{C}^{\rb\pm})^{-1}= e^{-T^\rb_j}\phi^\pm_j
\ee
It follows that the operators defined in (\ref{hat_A_r_j}, (\ref{hat_A_j^pm}), (\ref{hat_A_r_p}) are given by
\be
\hat{A}^\rb_j: = \hat{C}^\rb J_j(\hat{C}^\rb)^{-1},\quad j=1,2,\dots
\ee
and
 \be
\hat{A}^{\rb\pm}_j :=\hat{C}^{\rb\pm} J_j^{B\pm} (\hat{C}^{\rb\pm})^{-1},\quad j=1, 3, 5,\dots
\ee
For odd $j$  we have
\be
\hat{A}^\rb_j  =  \hat{C}^\rb(J_j^{B+}+J^{B-}_j) (\hat{C}^\rb)^{-1}=\hat{A}_j^{\rb+} + \hat{A}_j^{\rb-},\quad i=1,3,5,\dots
\ee

Defining
\be
\hat{\gamma}^{\rb}_+({\bf p}):=(\hat{C}^\rb)^{-1}\hat{\gamma}_+({\bf p})\hat{C}^\rb,\quad 
\hat{\gamma}^{\rb,B\pm}({\bf p}_B):= (\hat{C}^{\rb\pm})^{-1}\hat{\gamma}^{B\pm}({\bf p}_B) \hat{C}^{\rb \pm}
\ee
it follows from  (\ref{C_C_pm}) and (\ref{gamma_factoriz}) that
\be
\label{gamma_r_factoriz}
\hat{\gamma}_+^\rb({\bf p}')=\hat{\gamma}^{\rb ,B+}({\bf p}_B)\gamma^{ \rb, B-}({\bf p}_B).
\ee
The equalities
\be
s_\lambda({\bf t}|A^\rb({\bf p}))=\langle 0|\hat{\gamma}_+(\tb)\hat{\gamma}^\rb_+({\bf p})|\lambda\rangle
\label{gen-S=VEV}
\ee
and
\be
\QQ_{\mu^\pm}({\bf t}_B|A^{\rb}({\bf p}_B))=\langle 0|\hat{\gamma}^\pm(\tb_B)\hat{\gamma}^{\rb, \Bb\pm}({\bf p}_B)|\mu^\pm\rangle.
\label{Q_r_p_fermi}
\ee
follow from  definitions (\ref{gen_Schur_def}) and (\ref{gen_Q_def}), when
the group elements  $\hat{g}(h(A))$ and $\hat{h}^\pm(A)$  are chosen as in (\ref{g_Ar_h_Ar_def}) and $\psi_i(\tilde{A})$
and $\phi^\pm_j(A)$ computed by inserting these in (\ref{psi_A}) and (\ref{phi_pm_A}).

Inserting the projection operator 
\be
\Pi_0=\sum_{\rho\in\Pa} |\rho; 0\rangle \langle\rho; 0|
\ee
onto the subspace $\FF_0\ss\FF$  between $\hat{\gamma}_+({\bf t})$ 
and $\hat{\gamma}^\rb_+({\bf p})$  in (\ref{gen-S=VEV}), and using 
\be
s_\lambda(\tb)=\langle 0|\hat{\gamma}_+(\tb)|\lambda\rangle
\ee
and the slightly more general relation
\be
r_{\lambda/\rho}s_{\lambda/\rho}({\bf p})= \langle \rho|\hat{\gamma}_+^{\rb}({\bf p}) |\lambda \rangle 
\ee
gives (\ref{s_lambda_r_p_expansion}). (See formula (3.5.7) in  \cite{Or2}) 

Similarly, to prove (\ref{Q_mu_r_p_expansion}),  insert the projection operators 
\be
\Pi^\pm = \sum_{\nu^{\pm}}  |\nu^{\pm} ) ( \nu^{\pm} |   
\ee  
onto the subspaces $\tilde{\FF}^\pm_\phi \ss \FF$  between
$\hat{\gamma}^{B\pm}(\tb_B)$ and $\gamma^{\rb, B\pm}(\pb_B)$  in (\ref{Q_r_p_fermi}) 
and use
\be
\QQ_{\mu^{\pm}}({\bf t}_B)=
\langle 0| \hat{\gamma}^{B\pm}({\bf t}_B)|\mu^{\pm} )
\label{Q-You}
\ee
and
\be
\label{skewQ-fermionic}
r^B_{\mu^\pm /\nu}\QQ_{\mu^\pm/\nu}({\bf p}_B)=(\nu^\pm| \hat{\gamma}^{\rb, B\pm}({\bf p}_B)|\mu^\pm).
\ee
\end{proof}
It then follows  from (\ref{gamma_r_factoriz}) that  the polynomials 
$s_\lambda({\bf t}'|A^\rb({\bf p}'))$ and $ \tilde{Q}_{\hat{\mu}^\pm}(\frac{1}{2}{\bf t}_B|A^{\rb \pm}({\bf p}_B))$ are related
by (\ref{s_alpha_beta_Q_bilinear}), with $\tilde{A}= \tilde{A}^\rb(\pb')$ and $A= A^\rb(\pb_B)$.
\begin{remark}
With the special choice 
\bea
\pb' &\& =\tb_0:=(1,0,0,\dots)
\label{p_0_def}
\\
t_j &\&={1\over j}\sum_{k=1}^N x_k^j
\label{t-x}
\cr
r_j&\&=r_j(z,z'):=-(z+j)(z'+j),
\label{r_j_laguerre}
\eea
eq.~(\ref{s_lambda_r_p_expansion}) gives the multivariate Laguerre polynomial  \cite{Ol, La}. 
To see this, substitute (\ref{p_0_def}),  (\ref{r_j_laguerre}) into eq.~(\ref{s_lambda_r_p_expansion}) to obtain 
\be
s_\lambda(\tb|A^{\rb(z,z')}(\tb_0))= 
\sum_{\rho\subseteq\lambda} (-1)^{|\lambda|-|\rho|}(z)_{\lambda/\rho}(z')_{\lambda/\rho}
s_{\lambda/\rho}(\tb_0)s_{\rho}(\tb),
\label{multivar_laguerre_exp}
\ee
where
\be
(z)_{\lambda/\rho}:=\prod_{(i,j)\in\lambda/\rho }(z+j-i)
\ee
is the content product of $\{z+j\}_{j\in \Zb}$ over the skew Young diagram  $\lambda/\rho$, and 
\be
s_{\lambda/\rho}(\tb_0)= \det\left({1\over (\lambda_i - i -\rho_j +j)!}\right)_{1\le i, j \le \ell(\lambda)}.
\ee
Eq.~(\ref{multivar_laguerre_exp}) coincides with formula (4.3) of \cite{Ol} for multivariate Laguerre polynomials.
where the coefficients were also written in terms of integer evaluations of shifted Schur functions \cite{OkOl}.
\end{remark}

Note that  an arbitrary choice of parameters  $(z,z')$  in  formula (\ref{multivar_laguerre_exp}) does not satisfy
 the relation (\ref{g_h_def}), which is needed in order to apply  Corollary \ref{bilinear_general_schur}. 
For this, the reduction condition (\ref{B-reduction-r}) must be satisfied,
which requires that we impose the constraint \hbox{$z'=-1-z$}, giving
\be
r_j=(z+j)(z+1-j) = r_{1-j}.
\ee
Substituting this into (\ref{skew_strict_content_prod_r}), we get eq.~(\ref{Q_mu_r_p_expansion}), with
\be
r^B_{\mu^\pm/\nu}(z) :=\prod_{i=1}^{m(\nu)}\prod_{j=\nu_i+1}^{\mu_i}(z+j)(1+z-j)
\ee
and 
\be
\tilde{Q}_{\mu/\nu}(\tfrac 12\tb_0)
= \Pf \left(\frac{\mu_i-\mu_j-\nu_i+\nu_j}{(\mu_i+\mu_j-\nu_i-\nu_j)(\mu_i-\nu_i)!(\mu_j-\nu_j)!}\right)_{1\le i,j \le m(\mu)},
\ee
with the understanding that $\nu_i=0$ if $i > m(\nu)$. 
 \end{example}
 
 %%%%%%%%%%%%%%%%%%%%% Acknowledgements %%%%%%%%%%%%%%%%%
\medskip
\noindent 
\small{{\em Acknowledgements.} The work of J.~H. was supported by the Natural Sciences and Engineering Research Council of Canada (NSERC); that of A.~Yu.~O.  by RFBR grant 18-01-00273a and state assignment  N 0149-2019-0002. \hfill

  %%%%%%%%%%%%%%%%%%%%% Dedication %%%%%%%%%%%%%%%%%
\begin{center}
\smallskip
This work is dedicated to {\bf Grigori Olshanski} on the occasion of his 70th birthday. \hfill
\end{center}

 %%%%%%%%%%%%%%%%%%%%% Date availability %%%%%%%%%%%%%%%%%
%\smallskip
%\noindent
% \small{\it Data availability.} Data sharing is not applicable to this article since no data are involved. 
 %%%%%%%%%%%%%%%%%%%%%%%%%%%%%%%%%%%%%%%%%%%%%%%%%%% 

  %%%%%%%%%%%%%%%%%%%%% Bibliography %%%%%%%%%%%%%%%%% 

\end{document}